\colorlet{Changes@Color}{red}
\newtheorem{assumption}{Assumption}
\newtheorem{theorem}{Theorem}
\newtheorem{corollary}{Corollary}
\newtheorem{lemma}{Lemma}
\newtheorem{proposition}{Proposition}
\theoremstyle{remark}
\newtheorem{definition}{Definition}
\DeclareMathOperator{\pr}{\mathbb P}
\DeclareMathOperator{\E}{\mathbb E}
\DeclareMathOperator{\ind}{\mathbb I}
\def\ud{\mathrm{d}}
\newcommand{\R}{\mathbb{R}}
\newcommand{\C}{\mathbb{C}}
\newcommand{\calX}{\mathcal{X}}
\newcommand{\diag}{\mathrm{diag}}
\newcommand{\ubar}[1]{\underaccent{\bar}{#1}}
\newcommand{\verti}[1]{{\left\vert #1 \right\vert}}
\newcommand{\vertii}[1]{{\left\vert\kern-0.25ex\left\vert #1 \right\vert\kern-0.25ex\right\vert}}
\newcommand{\vertiii}[1]{{\left\vert\kern-0.25ex\left\vert\kern-0.25ex\left\vert #1 \right\vert\kern-0.25ex\right\vert\kern-0.25ex\right\vert}}
\renewcommand*{\@fnsymbol}[1]{\ensuremath{\ifcase#1\or \dagger\or *\or \ddagger\or
    \mathsection\or \mathparagraph\or \|\or **\or \dagger\dagger
    \or \ddagger\ddagger \else\@ctrerr\fi}}
\renewenvironment{abstract}{%
\if@abstrt
    \small
    \begin{center}
      {\normalfont\sectfont\nobreak\abstractname
        \vspace{-.5em}\vspace{\z@}}%
    \end{center}
\fi
    \quotation
}{%
\endquotation
} 
\begin{document}

\title{\LARGE Affine Jump-Diffusions: Stochastic Stability and Limit Theorems}
\author{
{\large Xiaowei Zhang\thanks{Corresponding author. Department of Management Sciences, College of Business, City University of Hong Kong, Hong Kong. Email: \href{mailto:xiaowei.w.zhang@cityu.edu.hk}{xiaowei.w.zhang@cityu.edu.hk}}} 
\and 
{\large Peter W. Glynn\thanks{Department of Management Science and Engineering, Stanford University, CA 94305, U.S.} }
}

\date{}

\maketitle

\vspace{-60pt}
\begin{abstract}
\noindent
\paragraph{Abstract.} Affine jump-diffusions constitute a large class of continuous-time stochastic models that are particularly popular in finance and economics due to their analytical tractability. Methods for parameter estimation for such  processes require ergodicity in order establish consistency and
asymptotic normality of the associated estimators. In this paper, we develop stochastic stability conditions for affine jump-diffusions, thereby providing the needed large-sample theoretical support for estimating such processes. We establish ergodicity for such models by imposing a ``strong mean reversion'' condition and a mild condition on the distribution of the jumps, i.e. the finiteness of a logarithmic moment. Exponential ergodicity holds if the jumps have a finite moment of a positive order. In addition, we prove strong laws of large numbers and functional central limit theorems for additive functionals for this class of models. 
\paragraph{Key words.} affine jump-diffusion; ergodicity; Lyapunov inequality; strong law of large numbers; functional central limit theorem
\end{abstract}

\section{Introduction}\label{sec:intro}
Affine jump-diffusion (AJD) processes constitute an important class of continuous time stochastic models that are widely used in finance and econometrics. This class of models is flexible enough to capture various empirical attributes such as stochastic volatility and leverage effects; see, e.g., \citet{Barndorff-NielsenShephard01}. Furthermore, the affine structure permits efficient computation, as a consequence of the fact that the characteristic function of its transient distribution is of an exponential affine form. The transform can then be computed by solving a system of ordinary differential equations (ODEs) of generalized Riccati type; see \citet{DuffiePanSingleton00}. The ability to efficiently compute such characteristic functions then leads to significant tractability both for computing various expectations and probabilities, and for the use of ``method of moments''  for calibrating such models; see \cite{Singleton01}, \cite{Bates06}, and \cite{FilipovicMayerhoferSchneider13}. 

AJD processes include a number of important special cases: the Ornstein-Uhlenbeck (OU) process, also known as the Vasicek model in \citet{Vasicek77}; the square-root diffusion process, also known as the Cox-Ingersoll-Ross (CIR) model in \citet{CIR1985}; and the Heston stochastic volatility model of \citet{Heston93}. Discussion of related extensions can be found in \citet{DuffieKan96,DaiSingleton00, DuffiePanSingleton00, Barndorff-NielsenShephard01,CheriditoFilipovicKimmel07}, and \citet{Collin-DufresneGoldsteinJones08}. Moreover, AJDs are also closely related to affine point processes, which involve jump intensities having an affine dependence on the state of the underlying AJD process. This class of point processes includes the Hawkes process \citep{Hawkes71} as a special case and has been widely used in financial economics to capture the self-exciting behavior that is often observed in event arrival data; see, e.g., \cite{ErraisGieseckeGoldberg10}, \cite{Ait-SahaliaCacho-DiazLaeven15}, and \cite{GaoZhouZhu18}.

As just noted, the ability to estimate and calibrate AJD models plays a key role in the popularity of such models in finance and economics. Of course, the large-sample theoretical support for estimation procedures for such AJD processes relies upon strong laws of large numbers (SLLNs), functional central limit theorems (FCLTs), and related ergodic theory. Our goal in this paper is to provide the first comprehensive development of the mathematical conditions under which such SLLNs and FCLTs hold for AJD processes. After stating our main results in Section \ref{sec:formulation}, we prove in Section \ref{sec:stability} that a canonical AJD is ergodic under mild conditions having to do with mean reversion and the distribution of the jumps. We then study conditions guaranteeing exponential ergodicity. Finally, in Section \ref{sec:limits} we develop SLLNs and FCLTs for AJDs. 

In terms of related literature, the study of large time ``moment explosions'' for AJD processes, and its close connection to implied volatility asymptotics, has been studied by, for example, \citet{Lee04}. A class of two-dimensional stochastic volatility models, of which the Heston model is a special case, is analyzed in \citet{AndersenPiterbarg07} to identify the ``explosion time'' at which the transient moment of a given positive order becomes infinite. \citet{GlassermanKim10} study the exponential moments of multidimensional affine diffusions (ADs) having no jumps. They characterize the domain of finiteness of the moment generating function of the diffusion process as well as the behavior of this domain in the ``large time'' limit. The existence of a non-degenerate limit of such moments implies tightness of the marginals of the AD, and is closely connected to existence of a stationary distribution. Their approach is based on the stability analysis of the limiting behavior of the Riccati equations that define the AD. Their results are extended in \citet{JenaKimXing12}, using a similar approach. See also \citet{Keller-Ressel11} for related analysis of a two-dimensional affine stochastic volatility model that permits jumps. Our work differs from the above papers primarily in two aspects. First, our model is a general AJD of arbitrary dimension, so that the models in the above papers are special cases of ours. In particular, the critical conditions for stochastic stability in these papers are special cases of ours. For instance, our stability condition is reduced to those in \citet{GlassermanKim10} and \citet{JenaKimXing12} in the absence of jumps; the stability condition of the so-called ``variance'' process in \citet{Keller-Ressel11} is a one-dimensional special case of ours. Second, their work focuses on existence of the limiting distribution, whereas we further establish ergodicity/exponential ergodicity results. The richer results on stochastic stability stem from the different approach we follow in this paper. Our analysis relies on Lyapunov criteria for Markov processes; see, e.g., \citet{MeynTweedie93_part3}. Related stability theory can be found in \citet{Masuda04,BarczyDoringLiPap14,JinRudigerTrabelsi16} and \cite{JinKremerRudiger17}, but the AJDs studied there all involve a state-independent jump intensity.

\section{Model Formulation and Main Results}\label{sec:formulation}

We will adopt the following notation throughout the paper. 
\begin{itemize}
\item
We write $\R_+^d:=\{v\in \R^d: v_i\geq 0, i=1,\ldots,d\}$ and $\R_-^d:=\{v\in \R^d: v_i\leq 0, i=1,\ldots,d\}$.
\item
A vector $v\in \R^d$ is treated as a column vector, $v^\intercal$ denotes its transpose, $\vertii{v}$ denotes its Euclidean norm.
\item
For a matrix $A$, $A\succeq 0$ means that $A$ is symmetric positive semidefinite and $A \succ 0$ means that $A$ is symmetric positive definite.
\item
We write $v_{\mathcal I}=(v_i:i\in {I}) $ and $A_{\mathcal{IJ}}=(A_{ij}:i\in I, j\in J)$, where $v\in\R^d$ is a vector, $A\in\R^{d\times d}$ is a matrix, and $\mathcal {I, J} \subseteq\{1,\ldots,d\}$ are two index sets. 
\item
We use $\mathbf{0}$ to denote a zero vector or a zero matrix, and $\mathrm{Id}(i)$ to denote a matrix with all zero entries except the $i$-th diagonal entry is 1, regardless of dimension.
\item
For a set $K\subseteq \R^d$, $\ind_K(x)$ denotes the indicator function associated with $K$, i.e. $\ind_K(x)=1$ if $x\in K$ and 0 otherwise.
\end{itemize}

Fix a complete probability space $(\Omega, \mathscr F, \pr)$ equipped with a filtration $\{\mathscr F_t:t\geq 0\}$ that satisfies the \emph{usual hypotheses} \citep[p.3]{Protter03}. Suppose that a stochastic process $X=(X(t):t\geq 0)$ with state space $\calX\subseteq \R^d$ satisfies the following stochastic differential equation (SDE) 
\begin{equation}\label{eq:SDE}
\begin{aligned}
\ud X(t) & =  \displaystyle\mu(X(t))\,\ud t+\sigma(X(t))\,\ud
W(t)+ \int_{\R^d}zN(\ud t,\ud z), \\
X(0) &= x\in\calX,
\end{aligned}
\end{equation}
where $W=(W(t):t\geq 0)$ is a $d$-dimensional Wiener process and $N(\ud t,\ud z)$ is a random counting measure on $[0,\infty)\times \R^d$ with compensator measure $\Lambda(X(t\text{-}))\ud t\nu(\ud z)$; moreover, $\mu:\R^d\mapsto\R^d$, $\sigma:\R^d\mapsto\R^{d\times d}$, and $\Lambda:\R^d\mapsto \R$ are measurable functions, and $\nu$ is a Borel measure on $\R^d$. In the sequel, we will write $\pr_x(\cdot)=\pr(\cdot|X(0)=x)$ and $\pr_\eta(\cdot)=\int_{\calX}\pr(\cdot|X(0)=x)\eta(\ud x)$ for an initial distribution $\eta$; $\E_x$ and $\E_\eta$ denote the corresponding expectation operators.

We call $X$ an AJD if the drift $\mu(x)$, diffusion matrix $\sigma(x)\sigma(x)^\intercal$, and jump intensity $\Lambda(x)$ are all affine in $x$, namely, 
\begin{equation}\label{eq:affine_coeff}
\begin{aligned}
\mu(x)& =b+\beta x, \qquad b\in \R^d,\; \beta\in\R^{d\times d}\\
\sigma(x)\sigma(x)^\intercal&= a+ \displaystyle\sum_{i=1}^dx_i\alpha_i,\qquad a\in\R^{d\times d},\; \alpha_i\in\R^{d\times d},\; i=1,\ldots, d \\
\Lambda(x)&=\lambda+\kappa^\intercal x, \qquad \lambda\in \R,\;\kappa\in\R^d.
\end{aligned}
\end{equation}

This paper is largely motivated by statistical calibration of AJDs. Most calibration procedures that have been applied to AJDs are based on some estimating equation as follows. Let $\Xi$ denote the collection of unknown parameters. For simplicity, we assume that the process $X$ is discretely sampled at time epochs $\{k\Delta:k=0,1,\ldots,n\}$ for some $\Delta>0$. To estimate $\Xi$, one judiciously selects a tractable function $h(x,y;\Xi)$ for which $\E[h(X(0),X(\Delta);\Xi)] =0$, and then solves the equation 
\[\frac{1}{n}\sum_{k=1}^n h(X((k-1)\Delta),X(k\Delta);\hat\Xi_n) = 0,\]
to compute the estimate $\hat\Xi_n$. In a situation where the dimension of $h$ is greater than the dimension of $\Xi$, one can use the generalized method of moments \citep{Hansen82}. Typical choices of $h$ include the marginal characteristic function of the conditional distribution of $X(k\Delta)$ given $X((k-1)\Delta)$ as in \citet{Singleton01}, or $\mathscr A g(x)$ for some tractable function $g$ with enough smoothness, where $\mathscr A$ is the operator defined in (\ref{eq:operator}) as in \citet{HansenScheinkman95}. See also \citet{DuffieGlynn04} for a choice of $h$ that also utilizes the operator $\mathscr A$ but in a context where $X$ is sampled at random times rather than deterministic times. 

In order to establish consistency and asymptotic normality of $\hat\Xi_n$, it is standard to assume positive Harris recurrence as well as certain moment conditions on the function $h$; see, e.g., \citet{Hansen82}. The SLLNs and FCLTs that we present as part of Theorem \ref{theo:state_ind_jump} and Theorem \ref{theo:state_dep_jump} provide large-sample theoretical support for establishing these asymptotic properties of the estimator. We refer interested readers to \citet{Ait-Sahalia07} for an extensive survey on various statistical calibration methods for general jump-diffusions and related assumptions for statistical validity.

\subsection{Main Assumptions}

The following three assumptions are universal throughout the paper.

\begin{assumption}\label{asp:state_space}
Let $\calX= \R_+^m\times \R^{d-m}$. For each $x\in\calX$, there exists a unique $\calX$-valued strong solution to the SDE \eqref{eq:SDE} with coefficients \eqref{eq:affine_coeff}.
\end{assumption}

\begin{assumption} \label{asp:admissible}
Let $ \mathcal  I=\{1,\ldots,m\}$ and $\mathcal  J=\{m+1,\ldots,d\}$ for some $0\leq m\leq d$. 
\begin{enumerate}[label=(\roman*)]
\item
$a\succeq0$ with $a_{\mathcal{II}}=\mathbf{0}$ 
\item
$\alpha_i\succeq0$ and $\alpha_{i,\mathcal{II}}=\alpha_{i,ii}\cdot\mathrm{Id}(i)$ for $i\in \mathcal I$; $\alpha_i=\mathbf{0}$ for $i\in  \mathcal J$;
\item
$b\in\R_+^m\times \R^{d-m}$;
\item
$\beta_{\mathcal {I J}}=\mathbf{0}$ and $\beta_{\mathcal {I I}}$ has non-negative off-diagonal elements;
\item
$\lambda\in\R_+$, $\kappa_{\mathcal I}\in\R^m_+$ and $\kappa_{\mathcal J}=\mathbf{0}$;
\item
$\nu$ is a probability distribution on $\calX $.
\end{enumerate}
\end{assumption}

\begin{assumption}\label{asp:irreducible}
$a_{\mathcal{JJ}}\succ 0$ and $2b_i>\alpha_{i, ii}>0$ for $i=1,\ldots,m$. 
\end{assumption}

In this paper, we focus on AJDs with \emph{canonical} state space (Assumption \ref{asp:state_space}) and \emph{admissible} parameters (Assumption \ref{asp:admissible}). In the absence of jumps (i.e., $\lambda=0$ and $\kappa=\mathbf{0}$), the existence and uniqueness of a strong solution to the SDE \eqref{eq:SDE} with coefficients \eqref{eq:affine_coeff} is established in \cite{FilipovicMayerhofer09}. They first prove the existence of a weak solution, then prove pathwise uniqueness of the solution, and finally apply the Yamada--Watanabe theorem \cite[Corollary 5.3.23]{KaratzasShreve91}. The same approach is followed in \cite{DawsonLi06} to prove the case of AJDs in one or two dimensions.

Clearly, under Assumption \ref{asp:admissible} both the diffusion matrix and the jump intensity are independent of $x_{\mathcal J}$, i.e. $\sigma(x)\sigma(x)^\intercal = a+\sum_{i=1}^m x_i\alpha_i$ and the jump intensity $\Lambda(x)=\lambda+\sum_{i=1}^mx_i\kappa_i$. In financial applications, the first $m$ components $(X_1,\ldots,X_m)$ are often used to model volatility processes and thus are referred to as \emph{volatility factors}, whereas the other $(d-m)$ components are referred to as \emph{dependent factors}. 

The jumps of the AJDs we study here have finite activity, a consequence of the fact that $\nu$ is assumed to be a probability distribution rather than a $\sigma$-finite measure. Nevertheless, this restriction is imposed merely for mathematical simplicity; the main results could also be proved for the case of infinite activity at the cost of a more involved analysis. One may recognize that the SDE \eqref{eq:SDE} with finite activity jumps is  precisely the model proposed in \citet{DuffiePanSingleton00}, which already covers a substantial number of financial and economic applications. 

For a one-dimensional AJD such as the CIR model, the condition $2b_i>\alpha_{i, ii}>0$ in Assumption \ref{asp:irreducible} is known as the Feller condition, which guarantees that the process stays positive. On the other hand, as detailed in \citet{MeynTweedie93_part2,MeynTweedie93_part3}, irreducibility is usually required in order to apply Lyapunov criteria to a continuous time Markov process. For instance, positive Harris recurrence is shown to be equivalent to ergodicity if some skeleton chain is irreducible in \citet{MeynTweedie93_part2}. Assumption \ref{asp:irreducible} serves this purpose (Proposition \ref{prop:irreducible}). Specifically, it is used to prove that $X$ admits a positive transition density. Note that the existence of a transition density for AJDs is established in \cite{FilipovicMayerhoferSchneider13} but their proof requires $b_i>\alpha_{i, ii}>0$ for $i=1,\ldots,m$, which is stronger than our Assumption \ref{asp:irreducible}.

\subsection{Main Results}
Prior to presenting the main results of the paper, let us review several concepts regarding stochastic stability of a Markov process.  

\begin{definition}
A Markov process $X$ with state space $\calX$ is called \emph{Harris recurrent} if there exists a non-trivial $\sigma$-finite measure $\varphi$ on $\calX$ such that $\int_0^\infty \ind_K(X(t))\,\ud t = \infty$, $\pr_x$-a.s., for all $x\in\calX$ and any measurable set $K$ with $\varphi(K)>0$.
\end{definition}

\begin{definition}
A Harris recurrent Markov process $X$ is called \emph{positive Harris recurrent} if it admits a finite invariant measure $\pi$, which can be normalized to a probability measure that is called the \emph{stationary distribution} of $X$, the measure $\pi$ must necessarily be unique.
\end{definition}

\begin{definition}
For a Markov process $X$ with state space $\calX$, a set $K\subseteq \calX$ is called \emph{uniformly transient} if there exists $M<\infty$ such that $\E_x\int_0^\infty \ind_K(X(t))\,\ud t\leq M$ for all $x\in\calX$. Furthermore, $X$ is called \emph{transient} if there is a countable cover of $\calX$ with uniformly transient sets.
\end{definition}

\begin{definition}\label{def:norm}
For any measurable function $f:\calX\mapsto [1,\infty)$ and any signed-measure $\varphi$ on $\calX$, define the \emph{$f$-norm} of $\varphi$ by $\vertii{\varphi}_f\coloneqq \sup_{|h|\leq f} |\varphi(h)|$, where
$\varphi(h)\coloneqq \int_{\calX} h(x)\, \varphi(\ud x)$.
When $f\equiv 1$, $\vertii{\cdot}_f$ is called the \emph{total variation
  norm} and is denoted by $\vertii{\cdot}$.
\end{definition}

The following concept is also needed to state our condition for stochastic stability for AJDs.
\begin{definition}
A square matrix is called \emph{stable} if all its eigenvalues have negative real parts.
\end{definition}

The following notation will facilitate the presentation in the sequel. Let $Z$ denote an $\R^d$-valued random variable with distribution $\nu$. For $q>0$, set $f_q(x)\coloneqq 1+\vertii{x}^q$. For any measurable functions $f:\calX\mapsto [1,\infty)$ and $h:\calX\mapsto \R$, set 
$\vertii{h}_f\coloneqq \sup_{x\in\calX}\{\verti{h(x)}/f(x)\}$.
Let $\mathcal D[0,1]$ denote the space of right continuous functions $x:[0,1]\mapsto \R$ with left limits, endowed with the Skorokhod topology. 

A distinctive feature of AJDs, besides the affine structure, relative to other jump-diffusion models is that its jump intensity is state-dependent. This property endows AJDs with greater flexibility in financial modeling but creates technical difficulties for analyzing the dynamics of the process. 
Indeed, differing theoretical treatments are needed, depending on whether the jump intensity is state-dependent, when we establish Lyapunov inequalities in Section \ref{sec:stability}. We therefore present our main results in two separate theorems. Theorem \ref{theo:state_ind_jump} covers only AJDs with state-independent jump intensities ($\kappa=\mathbf{0}$), whereas Theorem \ref{theo:state_dep_jump} allows state-dependent jump intensities. 

\begin{theorem}
\label{theo:state_ind_jump}
If Assumptions \ref{asp:state_space}--\ref{asp:irreducible} hold, $\kappa=\mathbf{0}$, $\beta$ is a stable matrix, and $\E\log(1+\vertii{Z})<\infty$, then:
\begin{enumerate}[resume,label=(\roman*)]
\item\label{item:ergo_ind}
$X$ is positive Harris recurrent and 
\begin{equation}\label{eq:ergodic_convergence}
\lim_{t\to\infty}\vertii{ \pr_x(X(t)\in\cdot) - \pi(\cdot)}=0,\quad x\in\calX,
\end{equation}
where $\pi$ is the stationary distribution of $X$.
\end{enumerate}
If, in addition, $\E\vertii{Z}^p<\infty$ for some $p>0$, then:
\begin{enumerate}[resume,label=(\roman*)]
\item\label{item:exp_ergo_ind}
For each $q\in(0,p]$, there exist positive finite constants $c_q$ and $\rho_q$ such that 
\begin{equation}\label{eq:exp_ergodic_convergence}
\vertii{\pr_x(X(t)\in\cdot) - \pi(\cdot)}_{f_q}\leq c_q f_q(x)e^{-\rho_qt},\quad t\geq 0,\;x\in\calX.
\end{equation}
\item\label{item:SLLN_ind}
For any measurable function $h:\calX\mapsto \R$ with $\vertii{h}_{f_p}<\infty$, 
\begin{equation}
\label{eq:LLN_cont_func}
\pr_x\left(\lim_{t\to\infty}\frac{1}{t}\int_0^t h(X(s))\,\ud s = \pi(h) \right)=1, \quad x\in\calX,
\end{equation}
and
\begin{equation}
\label{eq:LLN_disc_func}
\pr_x\left(\lim_{n\to\infty}\frac{1}{n}\sum_{i=1}^n h(X(i\Delta)) = \pi(h) \right)=1, \quad x\in\calX.
\end{equation}
\item\label{item:FCLT_ind}
For any measurable function $h:\calX\mapsto \R$ with $\vertii{h^q}_{f_p}<\infty$ for some $q>2$, there exist non-negative finite constants $\sigma_h$ and $\gamma_h$ such that 
\begin{equation}
\label{eq:FCLT_cont_func}
n^{1/2}\left(\frac{1}{n}\int_0^{n\cdot} h(X(s))\,\ud s - \pi(h)\right)\Rightarrow \sigma_h W(\cdot),
\end{equation}
and 
\begin{equation}
\label{eq:FCLT_disc_func}
n^{1/2}\left(\frac{1}{n}\sum_{i=1}^{\lfloor n\cdot\rfloor} h(X(i\Delta))-\pi(h)\right)\Rightarrow \gamma_h W(\cdot),
\end{equation}
as $n\to\infty$ $\pr_x$-weakly in $\mathcal D[0,1]$ for all $x\in\calX$, where $W$ is a one-dimensional Wiener process.
\end{enumerate}
\end{theorem}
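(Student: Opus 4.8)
The proof runs through the extended generator of the AJD together with the Foster--Lyapunov machinery of \citet{MeynTweedie93_part2,MeynTweedie93_part3}. Because $\kappa=\mathbf 0$ the jump intensity reduces to the constant $\lambda$, so for a sufficiently regular test function $V$,
\[
\mathscr A V(x)=(b+\beta x)^\intercal\nabla V(x)+\tfrac12\operatorname{tr}\!\Big(\big(a+\sum_{i=1}^m x_i\alpha_i\big)\nabla^2V(x)\Big)+\lambda\,\E[V(x+Z)-V(x)].
\]
The structural device that turns the hypothesis ``$\beta$ stable'' into a usable drift estimate is the Lyapunov lemma: stability of $\beta$ furnishes a matrix $Q\succ0$ and a constant $c_0>0$ with $\beta^\intercal Q+Q\beta\prec0$, whence, writing $u(x)=x^\intercal Qx$, one has $x^\intercal(\beta^\intercal Q+Q\beta)x\le-c_0\vertii x^2$. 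All Lyapunov functions below are built from the $Q$-weighted form $u$, so that the spectral mean reversion of $\beta$ is converted into a genuine quadratic dissipation, while the block-triangular admissibility constraints of Assumption \ref{asp:admissible} enter only through the requirement that $\beta$ be stable.

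For part \ref{item:ergo_ind} I would take the norm-like function $V(x)=\log(1+u(x))$, whose merely logarithmic growth is dictated by the weak moment hypothesis. The drift contribution equals $(2b^\intercal Qx+x^\intercal(\beta^\intercal Q+Q\beta)x)/(1+u)\le(2b^\intercal Qx-c_0\vertii x^2)/(1+u)$, which is bounded above by a strictly negative constant as $\vertii x\to\infty$; the diffusion contribution is $O(1/\vertii x)\to0$ since $\nabla^2V=O(1/\vertii x^2)$ while the diffusion matrix grows only linearly; and the jump contribution $\lambda\,\E[\log(1+u(x+Z))-\log(1+u(x))]\to0$ by dominated convergence, the integrand being dominated by $\log2+\log(1+\lambda_{\max}(Q)\vertii Z^2)$, which is $\nu$-integrable precisely because $\E\log(1+\vertii Z)<\infty$. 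Hence $\mathscr A V\le-\delta$ outside some compact $K$, i.e. $\mathscr A V\le-\delta+b\,\ind_K$. Compact sets are petite by the irreducibility established in Proposition \ref{prop:irreducible}, so this drift inequality yields positive Harris recurrence; the positive transition density of the same proposition makes $X$ aperiodic, upgrading this to the total-variation convergence \eqref{eq:ergodic_convergence}.

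For the exponential statement \ref{item:exp_ergo_ind} I would replace $V$ by the smooth polynomial function $W_q(x)=(1+u(x))^{q/2}$, which is comparable to $f_q$ but, unlike $f_q$, is twice continuously differentiable for every $q>0$, circumventing the non-differentiability of $\vertii x^q$ at the origin. Its drift contribution equals $q(1+u)^{q/2-1}\big(b^\intercal Qx+\tfrac12 x^\intercal(\beta^\intercal Q+Q\beta)x\big)\sim-c\,W_q$, the diffusion contribution is $O(\vertii x^{q-1})=o(W_q)$, and the jump increment $(1+u(x+Z))^{q/2}-(1+u(x))^{q/2}$ is $o(W_q(x))$ pointwise in $Z$, with a $\nu$-integrable majorant obtained from the elementary inequalities for $t\mapsto t^{q/2}$ (subadditivity and concavity when $q\le2$, and $(s+t)^{q/2}\le2^{q/2-1}(s^{q/2}+t^{q/2})$ when $q>2$) together with $\E\vertii Z^q\le1+\E\vertii Z^p<\infty$ for $q\le p$. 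This produces the geometric drift $\mathscr A W_q\le-\rho_q W_q+b_q\ind_{C}$, and the $f$-norm geometric ergodic theorem of \citet{MeynTweedie93_part3} gives \eqref{eq:exp_ergodic_convergence}.

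Parts \ref{item:SLLN_ind} and \ref{item:FCLT_ind} are then consequences of the stability already in hand. Taking $q=p$ forces $\pi(f_p)<\infty$, so any $h$ with $\vertii h_{f_p}<\infty$ is $\pi$-integrable and \eqref{eq:LLN_cont_func} follows from the ergodic theorem for positive Harris processes, while \eqref{eq:LLN_disc_func} follows by applying it to the $\Delta$-skeleton chain, which inherits positive Harris recurrence and the law $\pi$. For \ref{item:FCLT_ind}, the condition $\vertii{h^q}_{f_p}<\infty$ with $q>2$ places $h$ in $L^q(\pi)$ with $q>2$; combining geometric ergodicity with a Lyapunov bound for the Poisson equation $\mathscr A\hat h=-(h-\pi(h))$ produces the decomposition $\int_0^t[h(X(s))-\pi(h)]\,\ud s=\hat h(X(0))-\hat h(X(t))+M(t)$ with $M$ a square-integrable martingale, so the martingale FCLT yields \eqref{eq:FCLT_cont_func}, and \eqref{eq:FCLT_disc_func} follows from the FCLT for geometrically ergodic chains applied to the $\Delta$-skeleton, with $\sigma_h,\gamma_h$ the respective limiting standard deviations. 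I expect the main obstacle to be the verification of the geometric drift for $W_q$ uniformly over $q\in(0,p]$: the jump term must be dominated across both the concave regime $q\le2$ and the convex regime $q>2$ using only the single moment $\E\vertii Z^p<\infty$, while one must simultaneously justify, via a localization/Dynkin argument, that $V$ and $W_q$ lie in the domain of the extended generator despite the unbounded drift and diffusion coefficients.
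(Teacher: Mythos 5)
Your proposal is correct in substance and, for parts \ref{item:ergo_ind}, \ref{item:exp_ergo_ind} and \ref{item:SLLN_ind}, follows essentially the same route as the paper: the logarithmic Lyapunov function $\log(1+x^\intercal Qx)$ with $Q$ solving the Lyapunov equation for the stable matrix $\beta$, dominated convergence for the jump term with the majorant $\log 2+\log(1+\lambda_{\max}(Q)\vertii{Z}^2)$, the polynomial function $(1+x^\intercal Qx)^{q/2}$ for geometric drift, and petiteness of compact sets via Proposition \ref{prop:irreducible}. The only genuine divergence is in the discrete-time FCLT \eqref{eq:FCLT_disc_func}. You invoke ``the FCLT for geometrically ergodic chains applied to the $\Delta$-skeleton,'' but that theorem presupposes a discrete-time drift condition $\E_x[g(X(\Delta))]\leq cg(x)+b\ind_C(x)$ with $c<1$, and the paper explicitly flags this as the awkward step (the transition kernel of the skeleton is not available in closed form) and therefore takes a different route: it first proves \eqref{eq:FCLT_disc_func} under the stationary initial distribution $\pi$ via a mixing FCLT for stationary sequences (Ethier--Kurtz, Theorem 3.1), verifying the mixing-rate condition directly from the exponential ergodicity \eqref{eq:exp_ergodic_convergence}, and then transfers the result to arbitrary initial states by an approximation argument using total-variation convergence. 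Your route can be repaired: the localization/Dynkin argument you mention for $e^{c_1t}g_q(X(t))$ yields $\E_x g_q(X(\Delta))\leq e^{-c_1\Delta}g_q(x)+c_2c_1^{-1}$, which, since $g_q\geq 1$ and is norm-like, is exactly the skeleton drift condition off a large enough compact set; together with $h^2\leq \mathrm{const}\cdot f_p^{2/q}\leq \mathrm{const}\cdot f_p$ this puts you in the hypotheses of the $V$-uniform FCLT for chains. But as written, the passage from the continuous-time generator inequality to the skeleton drift bound is the one nontrivial step you have omitted, and it is precisely the point the paper's alternative argument is designed to circumvent. A second, minor point: in part \ref{item:exp_ergo_ind} your convexity bound $(s+t)^{q/2}\leq 2^{q/2-1}(s^{q/2}+t^{q/2})$ for $q>2$ gives a majorant of order $g_q(x)$ rather than $o(g_q(x))$; this still suffices because the pointwise ratio $[g_q(x+z)-g_q(x)]/g_q(x)\to 0$ and dominated convergence only needs an integrable (in $z$), $x$-uniform bound, but you should say so explicitly rather than assert the increment is $o(W_q)$ ``with a majorant.''
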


\begin{theorem}
\label{theo:state_dep_jump}
If Assumptions \ref{asp:state_space}--\ref{asp:irreducible} hold, $\beta+\E(Z)\kappa^\intercal$ is a stable matrix, and $\E\vertii{Z}<\infty$, then:
\begin{enumerate}[resume,label=(\roman*)]
\item\label{item:ergo_dep}
$X$ is positive Harris recurrent and (\ref{eq:ergodic_convergence}) holds.
\end{enumerate}
If, in addition, $\E\vertii{Z}^p<\infty$ for some $p\geq 1$. Then:
\begin{enumerate}[resume,label=(\roman*)]
\item\label{item:exp_ergo_dep}
For each $q\in[1,p]$, there exist positive finite constants $c_q$ and $\rho_q$ such that (\ref{eq:exp_ergodic_convergence}) holds.
\item\label{item:SLLN_dep}
For any measurable function $h:\calX\mapsto \R$ with $\vertii{h}_{f_p}<\infty$, (\ref{eq:LLN_cont_func}) and (\ref{eq:LLN_disc_func}) hold.
\item\label{item:FCLT_dep}
For any measurable function $h:\calX\mapsto \R$ with $\vertii{h^q}_{f_p}<\infty$ for some $q>2$, there exist non-negative finite constants $\sigma_h$ and $\gamma_h$ such that (\ref{eq:FCLT_cont_func}) and (\ref{eq:FCLT_disc_func}) hold 
as $n\to\infty$ $\pr_x$-weakly in $\mathcal D[0,1]$ for all $x\in\calX$.
\end{enumerate}
\end{theorem}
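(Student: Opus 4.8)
The plan is to verify the Foster--Lyapunov drift criteria of \citet{MeynTweedie93_part3} for the AJD $X$ and then read the four conclusions off the general theory. Writing $\mathscr A$ for the extended generator, its action on a smooth test function $g$ is
\[
\mathscr A g(x) = \mu(x)^\intercal \nabla g(x) + \tfrac12\operatorname{tr}\big(\sigma(x)\sigma(x)^\intercal \nabla^2 g(x)\big) + \Lambda(x)\int_{\R^d}\big[g(x+z)-g(x)\big]\,\nu(\ud z).
\]
The crucial observation is that on a linear functional $g(x)=v^\intercal x$ one obtains $\mathscr A g(x)=(v^\intercal b+\lambda\,v^\intercal\E(Z))+v^\intercal(\beta+\E(Z)\kappa^\intercal)x$, so the state-dependent jumps enter the drift precisely through the effective mean-reversion matrix $M:=\beta+\E(Z)\kappa^\intercal$. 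This explains why stability of $M$ (rather than of $\beta$ alone, as in Theorem \ref{theo:state_ind_jump}) is the correct hypothesis, and why the integral $\int z\,\nu(\ud z)=\E(Z)$ forces the first-moment requirement $\E\vertii{Z}<\infty$.

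The main work, and the step I expect to be the principal obstacle, is the construction of the Lyapunov functions $V_q$. Exploiting the admissibility block structure $\beta_{\mathcal{IJ}}=\mathbf 0$ and $\kappa_{\mathcal J}=\mathbf 0$, the matrix $M$ is block lower triangular with diagonal blocks $M_{\mathcal{II}}=\beta_{\mathcal{II}}+\E(Z)_{\mathcal I}\kappa_{\mathcal I}^\intercal$ and $\beta_{\mathcal{JJ}}$, both of which are stable since the eigenvalues of $M$ are those of its diagonal blocks. Because $\beta_{\mathcal{II}}$ has non-negative off-diagonal entries (Assumption \ref{asp:admissible}(iv)) and $\E(Z)_{\mathcal I},\kappa_{\mathcal I}\geq\mathbf 0$, the block $M_{\mathcal{II}}$ is a stable Metzler matrix, so $-M_{\mathcal{II}}$ is a nonsingular M-matrix and there is a strictly positive $v_{\mathcal I}>\mathbf 0$ with $v_{\mathcal I}^\intercal M_{\mathcal{II}}<\mathbf 0$ componentwise; this controls the volatility factors by a linear function. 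For the free dependent factors I would add a smooth term comparable to $\vertii{x_{\mathcal J}}^q$ obtained from a Lyapunov equation for $\beta_{\mathcal{JJ}}$, and combine the two pieces into a single $V_q$ with $V_q\asymp f_q$, choosing the coupling constant large enough that the favourable drift in the $\mathcal I$-directions dominates the cross term $\beta_{\mathcal{JI}}x_{\mathcal I}$ in the drift of $x_{\mathcal J}$. The delicate estimate is the jump contribution $\Lambda(x)\int[V_q(x+z)-V_q(x)]\,\nu(\ud z)$: since $\Lambda(x)$ grows linearly, the increment must be bounded at order $\vertii{x}^{q-1}$ so that the product stays at order $\vertii{x}^{q}=O(V_q)$ with the sign dictated by $M$. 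This is exactly where convexity forces the restriction $q\geq 1$ (for $q<1$ the concave increment is only $O(\vertii{z}^q)$ and the linear intensity makes the jump term dominate), and where the moment bound $\E\vertii{Z}^{q}<\infty$, available for $q\le p$, is consumed; the outcome is the geometric drift inequality $\mathscr A V_q\leq -\rho_q V_q+d_q\ind_K$ on a compact $K$.

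The remaining structural ingredient is petiteness. Assumption \ref{asp:irreducible} yields, via Proposition \ref{prop:irreducible}, a strictly positive transition density, so $X$ is $\psi$-irreducible and every compact set is petite, a property inherited by the $\Delta$-skeleton chain. Part \ref{item:ergo_dep} then follows from the drift with $q=1$, which needs only $\E\vertii{Z}<\infty$, together with the Harris recurrence criterion of \citet{MeynTweedie93_part2}; part \ref{item:exp_ergo_dep} follows from the geometric drift for each $q\in[1,p]$ by the $f$-norm exponential ergodicity theorem of \citet{MeynTweedie93_part3}.

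For part \ref{item:SLLN_dep}, the $f_p$-regularity implied by the drift gives the ergodic theorem \eqref{eq:LLN_cont_func} for additive functionals with $\vertii{h}_{f_p}<\infty$, and the same argument applied to the $\Delta$-skeleton yields \eqref{eq:LLN_disc_func}. Finally, for part \ref{item:FCLT_dep} I would solve the Poisson equation $\mathscr A\hat h=-(h-\pi(h))$ with $\hat h(x)=\int_0^\infty(\E_x h(X(t))-\pi(h))\,\ud t$, whose convergence and $f_p$-bound are guaranteed by the exponential ergodicity of part \ref{item:exp_ergo_dep}; writing $\int_0^t(h(X(s))-\pi(h))\,\ud s=-(\hat h(X(t))-\hat h(X(0)))+M(t)$ for the associated Dynkin martingale $M$, the hypothesis $\vertii{h^q}_{f_p}<\infty$ with $q>2$ makes the martingale increments square-integrable, the quadratic variation converges by part \ref{item:SLLN_dep}, and the martingale FCLT delivers \eqref{eq:FCLT_cont_func} with the boundary term $\hat h(X(t))-\hat h(X(0))$ asymptotically negligible after scaling; \eqref{eq:FCLT_disc_func} follows from the analogous Poisson equation for the resolvent of the skeleton chain.
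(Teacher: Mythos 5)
Your overall strategy---Foster--Lyapunov drift inequalities for the extended generator, combined with the general Harris-recurrence, $f$-norm ergodicity, SLLN and FCLT machinery of Meyn--Tweedie and Glynn--Meyn---is the one the paper follows, and your observation that $\mathscr A$ acts on linear functionals through $\beta+\E(Z)\kappa^\intercal$ correctly identifies why that matrix governs stability. The genuine gap is in your construction of $V_q$ for $q>1$, which is precisely the regime needed for parts \ref{item:exp_ergo_dep}--\ref{item:FCLT_dep}. You take a \emph{linear} function $v_{\mathcal I}^\intercal x_{\mathcal I}$ on the volatility factors and a term of order $\vertii{x_{\mathcal J}}^q$ on the dependent factors. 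First, the resulting sum is not comparable to $f_q$ when $q>1$ (take $x_{\mathcal I}$ large and $x_{\mathcal J}$ bounded), so the drift inequality you would obtain cannot deliver $f_q$-exponential ergodicity, $\pi(f_p)<\infty$, or the moment hypotheses needed in \ref{item:SLLN_dep} and \ref{item:FCLT_dep}. Second, and more seriously, the cross terms cannot be absorbed by a large coupling constant: both $\nabla W_q(x_{\mathcal J})^\intercal\beta_{\mathcal J\mathcal I}x_{\mathcal I}$ and the jump contribution $\kappa_{\mathcal I}^\intercal x_{\mathcal I}\int (W_q(x_{\mathcal J}+z_{\mathcal J})-W_q(x_{\mathcal J}))\,\nu(\ud z)$ are of order $\vertii{x_{\mathcal J}}^{q-1}\vertii{x_{\mathcal I}}$, which along rays with $\vertii{x_{\mathcal I}}\asymp\vertii{x_{\mathcal J}}$ grows like $\vertii{x}^q$, while the favourable drift supplied by the linear piece is only $O(\vertii{x_{\mathcal I}})$; no fixed constant makes the latter dominate the former. (For $q=1$ your construction is fine, so part \ref{item:ergo_dep} survives.) The paper avoids the block decomposition entirely: stability of $\beta+\E(Z)\kappa^\intercal$ gives $H\succ 0$ with $-[H(\beta+\E(Z)\kappa^\intercal)+(\beta+\E(Z)\kappa^\intercal)^\intercal H]\succ 0$, and one works with the single function $g_q(x)=(1+\vertii{x}_H^2)^{q/2}$ in which every coordinate enters at order $q$. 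The delicate step is then exactly the one you flagged, carried out via the mean value theorem and dominated convergence to show $\mathscr Lg_q(x)\sim qg_q(x)\,x^\intercal H\E(Z)\kappa^\intercal x/\vertii{x}_H^2$ (this is where $q\geq 1$ and $\E\vertii{Z}^q<\infty$ are consumed), after which $x^\intercal H(\beta+\E(Z)\kappa^\intercal)x\leq -c\vertii{x}^2$ yields the geometric drift.

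A secondary divergence concerns the discrete-time FCLT \eqref{eq:FCLT_disc_func}. You dispatch it via a Poisson equation for the skeleton's kernel, but the paper deliberately avoids that route---the transition kernel $\pr_x(X(\Delta)\in\cdot)$ is not explicit---and instead first proves the result under $\pr_\pi$ by applying a mixing FCLT for stationary sequences (Ethier--Kurtz), with the mixing coefficients controlled by the exponential ergodicity of part \ref{item:exp_ergo_dep}, and then extends to arbitrary initial states by a shift-and-total-variation argument. Your route might be salvageable using the skeleton's inherited $f_p$-exponential ergodicity to bound the Poisson solution, but you would still need to verify $\pi$-square-integrability of that solution and an SLLN for the conditional variances, neither of which is addressed in your sketch.
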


We note that $X$ is called \emph{ergodic} if it has a stationary distribution $\pi$ and the convergence (\ref{eq:ergodic_convergence}) holds, whereas called $f$-\emph{exponentially ergodic} if
\[\vertii{\pr_x(X(t)\in\cdot) - \pi(\cdot)}_f\leq c(x)e^{-\rho t},\quad t\geq 0,\;x\in\calX,\]
for some functions $f:\calX\mapsto [1,\infty)$, $c:\calX\mapsto \R_+$ and some positive finite constant $\rho$. Clearly, $X$ is $f_p$-exponentially ergodic under the assumptions of Theorem \ref{theo:state_ind_jump}\ref{item:exp_ergo_ind} or Theorem \ref{theo:state_dep_jump}\ref{item:exp_ergo_dep}.

The key condition imposed here to establish positive Harris recurrence of AJDs is that $\beta+\E(Z)\kappa^\intercal$ is a stable matrix. If we adopt the convention that $0\times \infty=0$, when $\kappa=\mathbf{0}$ this condition is reduced to that $\beta$  is a stable matrix regardless of the finiteness of $\E\vertii{Z}$. The condition that $\beta$ is a stable matrix is typically assumed in the literature, including \citet{SatoYamazato84}, \citet{GlassermanKim10}, and \citet{JenaKimXing12},  in order that the process be mean reverting and have a stationary distribution. However, the first of the three articles works on a special L{\'e}vy-driven SDE, whereas the other two study ADs, so none of them allows state-dependent jump intensities as AJDs do. It can be shown that the stability of $\beta + \E(Z)\kappa^\intercal$ implies that of $\beta$; see Lemma 3 of \cite{ZhangBlanchetGieseckeGlynn15}. Thus, our condition is  stronger and we call it the \emph{strong mean reversion condition}. 

Note that $\E(Z)$ is the mean jump size and  $\kappa$ largely determines the magnitude of the jump intensity when the AJD takes on big values. To some extent, $\E(Z)\kappa^\intercal$ captures the impact of the jumps. Thus, by imposing the stability of $\beta+\E(Z)\kappa^\intercal$, we essentially assume that mean reversion is a dominating factor, more significant than the jumps, in the dynamics of the process. On the other hand, this condition is technically mild. Indeed, we show in Section \ref{sec:remarks} that it cannot be relaxed in general if positive Harris recurrence of an AJD is desired.

\section{Stochastic Stability}\label{sec:stability}

In this section, we apply  Lyapunov ideas to address the stochastic stability of $X$. A key step in this approach is to judiciously construct suitable Lyapunov functions; see \citet{MeynTweedie93_part3} for an extensive treatment of this approach. Nevertheless, we do not directly use the results there because their theory uses a definition of domain that insists on functions inducing martingales, whereas we work with local martingales.

Consider a twice-differentiable function $g:\calX\mapsto \R$. By virtue of It\^o's formula, 
\begin{align}
g(X(t)) & = g(X(0)) + \int_0^t \bigg[\nabla g(X(s\text{-}))^\intercal \mu(X(s\text{-})) + \frac{1}{2}\sum_{i,j=1}^d\frac{\partial^2 g(X(s\text{-}))}{\partial x_i\partial x_j}(\sigma(X(s\text{-}))\sigma(X(s\text{-}))^\intercal)_{ij}\bigg]\ud s \nonumber \\
&+ \int_0^t \nabla g(X(s\text{-}))^\intercal \sigma(X(s\text{-}))\,\ud W(s) + \int_0^t \int_{\calX} (g(X(s\text{-})+z) - g(X(s\text{-}))) N(\ud s,\ud z). \label{eq:ito}
\end{align}
By defining operators $\mathscr G$, $\mathscr L$, and $\mathscr A$ on twice-differentiable appropriately integrable functions $g$ via 
\begin{equation}
\label{eq:operator}
\begin{aligned}
\mathscr G g(x) \coloneqq & \nabla g(x)\cdot (b+\beta x) + \frac{1}{2}\sum_{i,j=1}^ d \frac{\partial ^2g(x)}{\partial x_i\partial   x_j}\left(a_{i,j}+\sum_{k=1}^d \alpha_{k,ij} x_k\right), \\
\mathscr L g(x) \coloneqq &  (\lambda+\kappa^\intercal x) \int_{\calX}(g(x+z)-g(x))\,\nu(\ud z), \\[0.5ex]
\mathscr A g(x) \coloneqq & \mathscr G g(x) + \mathscr L g(x),
\end{aligned}
\end{equation}
we may rewrite (\ref{eq:ito}) as 
\begin{equation}
\label{eq:ito_2}
\begin{aligned}
g(X(t)) =&  g(X(0)) + \int_0^t \mathscr Ag(X(s\text{-}))\,\ud s + S_1(t) + S_2(t),\\ 
S_1(t)\coloneqq &  \int_0^t \nabla g(X(s\text{-}))^\intercal \sigma(X(s\text{-}))\,\ud W(s), \\ 
S_2(t)\coloneqq& \int_0^t \int_{\calX} (g(X(s\text{-})+z) - g(X(s\text{-}))) \tilde N(\ud s,\ud z),
\end{aligned}
\end{equation}
where $\tilde N(\ud s,\ud z) = N(\ud s,\ud z)-\Lambda(X(s\text{-}))\ud s\nu(\ud z)$ is the compensated random measure of $ N(\ud s,\ud z)$. 

We introduce some notation to facilitate the construction of the needed Lyapunov inequalities. First, for a $d\times d$ matrix $H\succ 0$, define $\vertii{v}_H \coloneqq \sqrt{v^\intercal H v}$. Then, $\vertii{\cdot}_H$ is a \emph{vector norm} on $\R^d$ and it is easy to show that 
\begin{equation}\label{eq:norm_equivalence}
\ubar{\delta} \vertii{v}^2 \leq \vertii{v}_H^2\leq \bar{\delta}\vertii{v}^2,\quad v\in\R^d,
\end{equation}
where $(\delta_i:i=1,\ldots,d)$ are the eigenvalues of $H$, $\ubar{\delta} = \min\{\delta_i:i=1,\ldots,d\}$ and $\bar{\delta} = \max\{\delta_i:i=1,\ldots,d\}$. We can then define the following \emph{induced} matrix norms (see \citet[p.340]{HornJohnson12}). For a matrix $A\in\R^{d\times d}$, define 
\[\vertiii{A}\coloneqq\sup \left\{\frac{\vertii{Av}}{\vertii{v}}: \mathbf 0 \neq v\in\R^d\right\}\quad 
\mbox{and}\quad 
\vertiii{A}_H\coloneqq\sup \left\{\frac{\vertii{Av}_H}{\vertii{v}_H}: \mathbf 0 \neq v\in\R^d\right\}.
\]

\subsection{Positive Harris Recurrence and Ergodicity}\label{sec:ergodicity}

For each $\Delta>0$, let $X^\Delta\coloneqq(X(n\Delta):n=0,1,\ldots)$ denote the $\Delta$-skeleton of $X$. 

\begin{proposition}
\label{prop:irreducible}
Under Assumptions \ref{asp:state_space}--\ref{asp:irreducible}, $X^\Delta$ is $\varphi$-irreducible for any $\Delta>0$, where $\varphi$ is the Lebesgue measure on $\calX$. 
\end{proposition}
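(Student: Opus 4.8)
The plan is to establish $\varphi$-irreducibility by showing that the transition kernel of $X$ over an interval of length $\Delta$ charges every Lebesgue-positive set with positive probability, and the route is to first discard the jumps and then exploit nondegeneracy of the diffusion in the interior of $\calX$. I would introduce the pure-diffusion process $X^c$ solving $\ud X^c=\mu(X^c)\,\ud t+\sigma(X^c)\,\ud W$ with $X^c(0)=x$ and the \emph{same} Wiener process $W$. On the event $B=\{N([0,\Delta]\times\calX)=0\}$ that no jump occurs before time $\Delta$, pathwise uniqueness (Assumption \ref{asp:state_space}) forces $X$ and $X^c$ to coincide on $[0,\Delta]$, so for any measurable $A$,
\[
\pr_x(X(\Delta)\in A)\ \ge\ \pr_x(\{X^c(\Delta)\in A\}\cap B)\ =\ \E_x\!\Big[\ind_A(X^c(\Delta))\exp\!\Big(-\!\int_0^\Delta\!\Lambda(X^c(s))\,\ud s\Big)\Big],
\]
where the identity comes from representing $N$ as a thinning of a Poisson process and computing the conditional no-jump probability given the driving path. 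Because $\Lambda(x)=\lambda+\kappa^\intercal x$ is finite on $\calX$, the exponential weight is strictly positive, so the right-hand side is positive whenever $\pr_x(X^c(\Delta)\in A)>0$. This reduces the claim to $\varphi$-irreducibility of the affine \emph{diffusion} $X^c$.

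Next I would show that $\sigma(x)\sigma(x)^\intercal=a+\sum_{i=1}^m x_i\alpha_i$ is positive definite throughout the interior $(0,\infty)^m\times\R^{d-m}$. Admissibility does the algebra: since $a\succeq 0$ with $a_{\mathcal{II}}=\mathbf 0$, the block $a_{\mathcal{IJ}}$ must vanish, and since $\alpha_i\succeq 0$ with $\alpha_{i,\mathcal{II}}=\alpha_{i,ii}\mathrm{Id}(i)$, every row of $\alpha_i$ indexed by $\mathcal I\setminus\{i\}$ vanishes. The decomposition $v^\intercal(a+\sum_i x_i\alpha_i)v=v_{\mathcal J}^\intercal a_{\mathcal{JJ}} v_{\mathcal J}+\sum_{i=1}^m x_i\,v^\intercal\alpha_i v$ then shows, using $a_{\mathcal{JJ}}\succ 0$ and $\alpha_{i,ii}>0$ together with $x_i>0$ in the interior, that this quadratic form vanishes only at $v=\mathbf 0$. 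Hence $X^c$ is uniformly elliptic on compact subsets of the interior, and as its coefficients are affine (hence smooth), standard parabolic theory — equivalently, the Stroock--Varadhan support theorem applied to the associated control system — yields a transition density for $X^c$ that is continuous and strictly positive on the interior. Consequently $\pr_y(X^c(\Delta-\delta)\in A)>0$ for every interior point $y$ and every Lebesgue-positive $A\subseteq(0,\infty)^m\times\R^{d-m}$.

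Finally I would remove the requirement that the starting point lie in the interior. For $x\in\partial\calX$, the Feller-type condition $2b_i>\alpha_{i,ii}$ of Assumption \ref{asp:irreducible}, together with $\beta_{\mathcal{IJ}}=\mathbf 0$ and the nonnegative off-diagonal structure of $\beta_{\mathcal{II}}$, ensures via a one-dimensional CIR comparison that each volatility factor is strictly positive for all $t>0$, so that $\pr_x(X^c(\delta)\in(0,\infty)^m\times\R^{d-m})>0$ for any $\delta\in(0,\Delta)$. Splitting the interval at $\delta$ and using the Markov property,
\[
\pr_x(X^c(\Delta)\in A)\ \ge\ \E_x\!\big[\ind_{(0,\infty)^m\times\R^{d-m}}(X^c(\delta))\,\pr_{X^c(\delta)}(X^c(\Delta-\delta)\in A)\big]\ >\ 0,
\]
the positivity coming from the interior density of the previous step. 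Since $\partial\calX$ is Lebesgue-null, any $A$ with $\varphi(A)>0$ satisfies $\varphi(A\cap\mathrm{int}\,\calX)>0$, so it suffices to treat $A\subseteq\mathrm{int}\,\calX$; combining the three steps gives $\pr_x(X(\Delta)\in A)>0$ for all $x\in\calX$, i.e. $\varphi$-irreducibility of $X^\Delta$.

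I expect the main obstacle to be the careful treatment of the boundary of $\calX$ in the positivity argument: establishing that the volatility factors enter and remain in $(0,\infty)^m$ instantaneously under only the weaker condition $2b_i>\alpha_{i,ii}$ (rather than the $b_i>\alpha_{i,ii}$ used in \citet{FilipovicMayerhoferSchneider13}), and verifying the support-theorem hypotheses for a diffusion whose coefficients degenerate as the boundary is approached. By contrast, the no-jump reduction and the interior ellipticity computation are comparatively routine, the latter being a purely algebraic consequence of admissibility.
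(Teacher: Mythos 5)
Your opening reduction (conditioning on the no-jump event to pass from $X$ to the jump-free diffusion, with the weight $\exp(-\int_0^\Delta\Lambda(X^c(s))\,\ud s)>0$) is exactly the paper's Lemma \ref{lemma:irreducible}, and your algebraic verification that $\sigma(x)\sigma(x)^\intercal\succ 0$ on the interior of $\calX$ is correct. The genuine gap is the step you yourself flag as the ``main obstacle'': passing from ellipticity that holds only on compact subsets of the interior, and degenerates as $x_i\downarrow 0$, to a continuous strictly positive transition density. ``Standard parabolic theory'' in the form of Aronson-type two-sided bounds requires a uniform ellipticity constant on the whole state space, which you do not have; the Stroock--Varadhan support theorem describes the support of the path law but does not by itself produce a density, let alone a positive one, and the accessibility-plus-localization argument needed to combine it with merely local ellipticity for a diffusion whose coefficients degenerate at the boundary is precisely the content you would have to supply. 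As written, the middle step invokes results whose hypotheses are not met, so the proof does not close.

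The paper closes this gap with a change of variables that your argument is missing. It first applies the linear transformation of Lemma 7.1 of Filipovi\'c--Mayerhofer to put the diffusion matrix in block-diagonal form, with $\diag(\alpha_{1,11}x_1,\ldots,\alpha_{m,mm}x_m)$ in the $\mathcal{II}$-block; then, after using the Feller condition $2b_i>\alpha_{i,ii}$ to ensure the volatility factors stay strictly positive, it applies the bijection $y_i=2\sqrt{x_i}$ for $i\le m$ and $y_i=x_i$ for $i>m$. In the new coordinates the diffusion matrix becomes the constant block $\diag(\alpha_{1,11},\ldots,\alpha_{m,mm})$ together with $a_{\mathcal{JJ}}+\sum_i y_i^2\alpha_{i,\mathcal{JJ}}$, so the lower ellipticity bound is uniform and a standard theorem (Theorem 3.3.4 of Davies) yields a positive density directly, which is then pulled back through the bijection. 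If you adopt that device, the rest of your argument goes through; indeed your explicit treatment of boundary starting points, via instantaneous entry into the interior and the Markov property at time $\delta$, is somewhat more careful than the paper's, which only records non-attainability of the boundary from interior starting points.
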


The proof of Proposition \ref{prop:irreducible} relies on the following result, which is of interest in its own right. It reduces irreducibility of a jump-diffusion process to that of the associated diffusion process. 

\begin{lemma}
\label{lemma:irreducible}
Suppose that $X$ satisfies the SDE \eqref{eq:SDE}\footnote{Here, we do not restrict its coefficients $\mu$, $\sigma$, $\Lambda$ to follow the affine form (\ref{eq:affine_coeff}).}. Let $\tilde X=(\tilde X(t):t\geq 0)$ satisfy 
\begin{equation}\label{eq:AD}
\begin{aligned}
\ud \tilde X(t) & =  \displaystyle\mu(\tilde X(t))\,\ud t+\sigma(\tilde X(t))\,\ud
W(t), \\
\tilde X(0) &= x\in\calX,
\end{aligned}
\end{equation}
where $W$ is the $d$-dimensional Wiener process in \eqref{eq:SDE}. If $\tilde X^\Delta$ (resp., $\tilde X$) is $\varphi$-irreducible, then $X^\Delta$ (resp., $X$) is $\varphi$-irreducible.
\end{lemma}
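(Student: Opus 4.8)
The plan is to exploit the fact that, between its jumps, $X$ evolves exactly like the diffusion $\tilde X$, so that the event ``no jump has yet occurred'' couples the two processes; reachability of a set $K$ by $X$ can then be bounded below by a weighted reachability probability for $\tilde X$, and $\varphi$-irreducibility of $\tilde X$ transfers to $X$.

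First I would realize both processes on a common probability space through the standard interlacing (first-jump) construction. Let $\tilde X$ solve \eqref{eq:AD} driven by the Wiener process $W$ of \eqref{eq:SDE}, and let $E$ be a unit-mean exponential random variable independent of $W$. Define the first jump time $\tau_1\coloneqq\inf\{t\ge 0 : \int_0^t \Lambda(\tilde X(s))\,\ud s \ge E\}$ and set $X(s)=\tilde X(s)$ on $\{s<\tau_1\}$; at $\tau_1$ the path jumps by an independent $\nu$-distributed increment, after which the construction is restarted from the post-jump state. Pathwise uniqueness for \eqref{eq:SDE} guarantees that the process so built is a version of the solution $X$, and by construction $X(t)=\tilde X(t)$ on $\{\tau_1>t\}$.

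Then I would record the key inequality. Conditioning on the path of $\tilde X$, the definition of $\tau_1$ gives $\pr_x(\tau_1 > t \mid \tilde X) = \exp(-\int_0^t \Lambda(\tilde X(s))\,\ud s)$, so that for every $t\ge 0$ and every measurable $K$,
\[
\pr_x(X(t)\in K) \ \ge\ \pr_x(X(t)\in K,\ \tau_1>t) \ =\ \E_x\!\left[\ind_K(\tilde X(t))\,\exp\!\Big(-\!\int_0^t \Lambda(\tilde X(s))\,\ud s\Big)\right].
\]
Because the solution to \eqref{eq:SDE} is non-explosive and $\Lambda$ is locally bounded along continuous paths, $\int_0^t \Lambda(\tilde X(s))\,\ud s<\infty$ $\pr_x$-a.s., so the exponential weight is strictly positive a.s. Consequently the right-hand side is strictly positive whenever $\pr_x(\tilde X(t)\in K)>0$, since the integrand is non-negative and strictly positive $\pr_x$-a.s. on the event $\{\tilde X(t)\in K\}$.

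Finally I would transfer irreducibility. If $\tilde X^\Delta$ is $\varphi$-irreducible and $\varphi(K)>0$, then for each $x$ there is some $n$ with $\pr_x(\tilde X(n\Delta)\in K)>0$; the displayed inequality at $t=n\Delta$ then yields $\pr_x(X(n\Delta)\in K)>0$, so $X^\Delta$ is $\varphi$-irreducible. The continuous-time statement is identical after integrating the inequality against $e^{-t}\,\ud t$, since $\varphi$-irreducibility of $\tilde X$ forces $\pr_x(\tilde X(t)\in K)>0$ on a set of $t$ of positive Lebesgue measure, which in turn makes the resolvent of $X$ at $K$ positive. I expect the main obstacle to be the rigorous justification of the interlacing representation together with the conditional first-jump formula, and in particular verifying the a.s. finiteness of $\int_0^t\Lambda(\tilde X(s))\,\ud s$ that guarantees positivity of the weight, rather than the routine final irreducibility bookkeeping.
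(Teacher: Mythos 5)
Your proposal is correct and follows essentially the same route as the paper: both rest on the identity $\pr_x(X(t)\in K,\ \tau>t)=\E_x\bigl[\ind_K(\tilde X(t))\,e^{-\int_0^t\Lambda(\tilde X(s))\,\ud s}\bigr]$ for the first jump time $\tau$, together with the observation that the exponential weight is a.s.\ strictly positive, so positivity of $\pr_x(\tilde X(t)\in K)$ transfers to $X$. Your write-up is somewhat more explicit than the paper's (the interlacing construction and the a.s.\ finiteness of $\int_0^t\Lambda(\tilde X(s))\,\ud s$ are spelled out rather than implicit), but the underlying argument is the same.
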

\begin{proof}
Consider a measurable $K\subseteq \calX$ and let $\tau$ denote the first jump time of $X$. Then $\pr_x(X(t) = \tilde X(t)) = 1$ for $t<\tau^*$. It follows that for any $t>0$,
\begin{align*}
\pr_x(X(t)\in K, \tau^*>t) & = \E_x \left[\E\left(\ind(\tilde X(t)\in K, \tau^*>t)|X(s),0\leq s\leq t\right)\right] \\[0.5ex]
& = \E_x \left[\ind(\tilde X(t)\in K) \pr \left(\tau^*>t|\tilde X(s),0\leq s\leq t\right)\right] \\[0.5ex]
& = \E_x \left[\ind(\tilde X(t)\in K) e^{-\int_0^t \Lambda(\tilde X(s))\,\ud s}\right].
\end{align*}
Hence, $\pr_x(X(t)\in K, \tau^*>t)=0$ if and only if $\pr_x(\tilde X(t)\in K)=0$ for any $t>0$. It is then clear that the $\varphi$-irreducibility of  $\tilde X^\Delta$ (resp., $\tilde X$) implies that of $X^\Delta$ (resp., $X$). 
\end{proof}

\begin{proof}[Proof of Proposition \ref{prop:irreducible}.]

The key in the proof is to convert the AJD by a linear transformation used in \citet{FilipovicMayerhofer09} into a canonical representation in which the matrices involved are of special form. Specifically, note that if $X$ satisfies the SDE \eqref{eq:SDE} with coefficients (\ref{eq:affine_coeff}), then for any nonsingular matrix $A\in\R^{d\times d}$, the linear transformation $Y=AX$ satisfies 
\begin{equation}\label{eq:SDE_linear_trans}
\begin{aligned}
\ud Y(t) & =  (Ab + A\beta A^{-1}Y(t))\, \ud t + A\sigma\left(A^{-1}Y(t)\right)\,\ud W(t) + \displaystyle\int_{\R^d}Az N(\ud t,\ud z), \\
Y(0) & = A x,
\end{aligned}
\end{equation}
where $N(\ud t,\ud z)$ has the compensator measure $\Lambda(A^{-1}Y(t\text{-}))\ud t\nu(\ud z)$. So the drift, diffusion matrix, and intensity of SDE (\ref{eq:SDE_linear_trans}) are
$Ab + A\beta A^{-1}y$,  $A\sigma\left(A^{-1}y\right)\sigma\left(A^{-1}y\right)^\intercal A^\intercal$, and $\lambda+\kappa^\intercal A^{-1}y$, 
respectively, which are all affine in $y$. Consequently, the existence and uniqueness of a strong solution to \eqref{eq:SDE} is invariant with respect to nonsingular linear transformations. 

Since $\alpha_{i,ii}>0$ for all $i=1,\ldots,m$, it follows from Lemma 7.1 of \citet{FilipovicMayerhofer09} that there exists a nonsingular matrix $A\in\R^{d\times d}$ that maps $\R_+^m\times \R^{d-m}$ to itself and renders the transformed diffusion matrix in the following block-diagonal form 
\[
A\sigma\left(A^{-1}y\right)\sigma\left(A^{-1}y\right)^\intercal A^\intercal = \begin{pmatrix}
\diag(\alpha_{1,11}y_1,\ldots,\alpha_{m,mm}y_m) & \mathbf{0} \\
\mathbf{0} & h + \sum_{i=1}^m y_i \eta_i
\end{pmatrix}
\]
for some $(d-m)\times (d-m)$ matrices $h\succeq 0$ and $\eta_i\succeq 0$, $i=1,\ldots,m$. In particular, $A$ is of the form
\[A = 
\begin{pmatrix}
I_m & \mathbf{0} \\
D & I_{d-m} 
\end{pmatrix},
\]
for some $(d-m)\times m$ matrix $D$, where $I_m$ and $I_{d-m}$ are identity matrices. Moreover, it is straightforward to verify that $Ab$, $A\beta A^{-1}$, and $\kappa^\intercal A^{-1}$ satisfy both Assumption \ref{asp:admissible} and Assumption \ref{asp:irreducible} in lieu of $b$, $\beta$, and $\kappa$. Hence, we can assume without loss of generality  that the diffusion matrix of \eqref{eq:SDE} has the form 
\begin{equation}\label{eq:linear_trans}
\sigma(x)\sigma(x)^\intercal  = 
 \begin{pmatrix}
\diag(\alpha_{1,11}x_1,\ldots,\alpha_{m,mm}x_m) & \mathbf{0} \\
\mathbf{0} & a_{\mathcal{JJ}} + \sum_{i=1}^m x_i \alpha_{i,\mathcal{JJ}}
\end{pmatrix}.
\end{equation}
Hence, $\tilde X_{\mathcal I}(t)$ satisfies
\[
\begin{aligned}
\ud \tilde X_{\mathcal I}(t) & = (b_{\mathcal I} + \beta_{\mathcal{II}} \tilde X_{\mathcal I}(t))\,\ud t + 
\diag(\sqrt{\alpha_{1,11}x_1},\ldots,\sqrt{\alpha_{m,mm}x_m})\,
\ud W_{\mathcal I}(t), \\ 
\tilde X_{\mathcal I}(0) & =  x_{\mathcal I}\in\R_+^m.
\end{aligned}
\]
With the assumption that $2b_i>\alpha_{i,ii}$, $i=1,\ldots,m$, we can directly verify the conditions of the theorem on p.388 of \citet{DuffieKan96} to conclude that $\mathbf{0}\in\R_+^m$ is not attainable in finite time, i.e. $\tilde X_i(t)>0$ for all $t>0$ and $i=1,\ldots,m$, if $\tilde X_i(0)>0$, $i=1,\ldots,m$.

We now consider a bijective transformation $\tilde Y\coloneqq f(\tilde X)$, where $f:\calX\mapsto \calX$ is defined as follows: $f_i(x)=2\sqrt{x_i}$ for $i=1,\ldots,m$ and $f_i(x)=x_i$ for $x=m+1,\ldots,d$. Then, 
\[\frac{\partial f_i(x)}{\partial x_j}= \left\{
\begin{array}{ll}
\displaystyle x_i^{-1/2}, & \mbox{if } i=j, i=1,\ldots,m, \\
1, & \mbox{if } i=j, i=m+1,\ldots,d, \\
0, & \mbox{otherwise,}
\end{array}
\right.
\]
and 
\[ \frac{\partial^2 f_i(x)}{\partial x_k\partial x_l} = \left\{
\begin{array}{ll}
-\frac{1}{2}x_i^{-3/2}, & \mbox{if } i=k=l, i=1,\ldots,m, \\
0, & \mbox{otherwise.}
\end{array}
\right.
\]
It follows that, by It\^o's formula,
\[\ud f_i(\tilde X(t)) = \zeta_i(\tilde X(t))\,\ud t + \nabla f_i(\tilde X(t))^\intercal \sigma(\tilde X(t))\,\ud W(t), \]
for $i=1,\ldots,d$, where 
\[\zeta_i(x) = \frac{\partial f_i(x)}{\partial x_i} \mu_i(x) + \frac{1}{2} \frac{\partial^2 f_i(x)}{\partial x_i^2}  (\sigma(x)\sigma(x)^\intercal )_{ii}.\]
Note that we have shown that $x_i>0$, $i=1,\ldots,m$ for $x\in\calX$, so the function $\zeta(x)$ is well-defined for all $x\in\calX$. Let $f^{-1}$ denote the inverse mapping of $f$, i.e. $f^{-1}_i(y) = y_i^2$ for $i=1,\ldots,m$ and $f^{-1}_i(y)=y_i$, for $i=m+1,\ldots,d$. Then, 
\begin{equation}
\label{eq:SDE_Y_tilde}
\ud \tilde Y(t) = \zeta(f^{-1}(\tilde Y(t)))\,\ud t + \nabla f(f^{-1}(\tilde Y(t)))\sigma(f^{-1}(\tilde Y(t)))\,\ud W(t),
\end{equation}
where $ \nabla f\coloneqq (\frac{\partial f_i}{\partial x_j})_{1\leq i,j\leq d} $ is the Jacobian matrix of $f$. A straightforward calculation reveals that the diffusion matrix of (\ref{eq:SDE_Y_tilde}) is 
\[\nabla f(f^{-1}(y))\sigma(f^{-1}(y))\sigma(f^{-1}(y))^\intercal \nabla f(f^{-1}(y))^\intercal = \begin{pmatrix}
\diag(\alpha_{1,11}\,\ldots,\alpha_{m,mm}) & \mathbf{0} \\
\mathbf{0} & a_{\mathcal{JJ}} + \sum_{i=1}^m y_i^2 \alpha_{i,\mathcal{JJ}}
\end{pmatrix}.\]
Hence, in light of the assumption that $\alpha_{i,ii}>0$, $i=1,\ldots,m$ and $a_{\mathcal{JJ}}\succ 0$, the diffusion matrix of (\ref{eq:SDE_Y_tilde}) is \emph{uniformly elliptic}. It is well known that such diffusion processes admit a positive probability density; see, e.g., Theorem 3.3.4 of \citet{Davies89}. Since the mapping $f$ is bijective, we conclude that $\tilde X$ also admits a positive transition density, so $\tilde X^\Delta$ is $\varphi$-irreducible. This completes the proof in light of Lemma \ref{lemma:irreducible}.
\end{proof}

\begin{proposition}\label{prop:affine}
Under Assumptions \ref{asp:state_space} and \ref{asp:admissible}, $X$ is a stochastically continuous affine process. 
\end{proposition}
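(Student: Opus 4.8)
The plan is to verify the two defining attributes of an affine process in the sense of the affine-process literature (Duffie, Filipovi\'c, and Schachermayer): that $X$ is a stochastically continuous time-homogeneous Markov process, and that its conditional characteristic function is of exponential-affine form. Time homogeneity and the Markov property are immediate from Assumption~\ref{asp:state_space}, since a time-homogeneous SDE possessing a unique strong solution generates a time-homogeneous Markov family. Throughout I would work with the characteristic function, i.e. with purely imaginary exponents $g_\xi(x)\coloneqq e^{i\xi^\intercal x}$, $\xi\in\R^d$; this is what allows the argument to proceed with \emph{no} moment assumption on the jump distribution $\nu$, in keeping with the hypotheses of the proposition.

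For stochastic continuity I would argue that the c\`adl\`ag strong solution has no fixed-time jumps. Because $\nu$ is a probability measure and the intensity $\Lambda(X(t\text{-}))=\lambda+\kappa^\intercal X(t\text{-})$ is locally bounded along paths, the jumps have finite activity, and the successive jump times have absolutely continuous conditional distributions (between jumps $X$ coincides with the continuous diffusion $\tilde X$, so the conditional law of each jump time assigns zero mass to any fixed $t$). Hence $\pr_x(\Delta X(t)\neq 0)=0$ for every fixed $t$, and together with right continuity this gives $X(s)\to X(t)$ in $\pr_x$-probability as $s\to t$.

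The affine property is the substantive part. Applying $\mathscr A$ from \eqref{eq:operator} to $g_\xi$, and using $\nabla g_\xi=i\xi\,g_\xi$, $\partial^2_{x_ix_j}g_\xi=-\xi_i\xi_j\,g_\xi$ together with the affine coefficients \eqref{eq:affine_coeff}, one finds $\mathscr A g_\xi(x)=\bigl(F(\xi)+G(\xi)^\intercal x\bigr)g_\xi(x)$, where $F(\xi)=i\xi^\intercal b-\tfrac12\xi^\intercal a\xi+\lambda(\theta(\xi)-1)$, $G_k(\xi)=i(\beta^\intercal\xi)_k-\tfrac12\xi^\intercal\alpha_k\xi+\kappa_k(\theta(\xi)-1)$, and $\theta(\xi)\coloneqq\int_\calX e^{i\xi^\intercal z}\,\nu(\ud z)$. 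Thus $\mathscr A$ sends each such exponential to an affine multiple of itself, which is the algebraic signature of an affine process. The admissibility conditions of Assumption~\ref{asp:admissible} (in particular $\alpha_k=\mathbf 0$ and $\kappa_k=0$ for $k\in\mathcal J$, together with the sign constraints on $a$, $\alpha_i$, $b$, $\beta$, $\kappa$) are exactly what make $(F,G)$ a pair of admissible L\'evy--Khintchine exponents, so that $\theta$ is well defined and bounded on the domain $\mathcal U\coloneqq\{u\in\C^d:\Re u_i\leq 0\ \text{for } i\in\mathcal I,\ \Re u_i=0\ \text{for } i\in\mathcal J\}$ (for $u\in\mathcal U$ and $z\in\calX$ one has $\Re(u^\intercal z)\leq 0$, whence $|e^{u^\intercal z}|\leq 1$), and so that the associated generalized Riccati system
\begin{equation*}
\partial_t\psi(t,\xi)=G(\psi(t,\xi)),\ \ \psi(0,\xi)=i\xi;\qquad \partial_t\phi(t,\xi)=F(\psi(t,\xi)),\ \ \phi(0,\xi)=0
\end{equation*}
admits a unique solution that exists for all $t\geq 0$ and leaves $\mathcal U$ invariant.

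It then remains to identify $\exp(\phi(t,\xi)+\psi(t,\xi)^\intercal x)$ with $\E_x[g_\xi(X(t))]$. Fixing a horizon $T$, I would apply It\^o's formula \eqref{eq:ito_2} to $M(t)\coloneqq\exp(\phi(T-t,\xi)+\psi(T-t,\xi)^\intercal X(t))$; the Riccati equations make the finite-variation part vanish, so $M$ is a local martingale. The domain invariance gives $\Re\psi_i(s,\xi)\leq 0$ for $i\in\mathcal I$ and $\Re\psi_i(s,\xi)=0$ for $i\in\mathcal J$, and since $X_i(t)\geq 0$ for $i\in\mathcal I$ on the canonical state space we obtain $|M(t)|\leq e^{\Re\phi(T-t,\xi)}$, which is bounded on $[0,T]$; a bounded local martingale is a true martingale, so $\E_x[M(T)]=M(0)$ yields $\E_x[e^{i\xi^\intercal X(T)}]=\exp(\phi(T,\xi)+\psi(T,\xi)^\intercal x)$, as required. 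I expect the main obstacle to be this third step: establishing global-in-time existence of the Riccati flow on all of $[0,\infty)$ together with the invariance of $\mathcal U$, since this is precisely where the full strength of Assumption~\ref{asp:admissible} is needed, and it is exactly this invariance that supplies the boundedness used to promote the local martingale $M$ to a genuine one. The stochastic-continuity and generator computations are, by contrast, comparatively routine.
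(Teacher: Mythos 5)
Your proposal is correct and follows essentially the same route as the paper: both construct the exponential local martingale $M(t)=e^{\phi(T-t,u)+\psi(T-t,u)^\intercal X(t)}$ from the generalized Riccati equations, use the admissibility conditions (via the Duffie--Filipovi\'c--Schachermayer results on global existence and invariance of the left-half-plane domain) to get $\verti{M(t)}\leq 1$ and hence a true martingale, and read off the exponential-affine characteristic function. The only divergences are in the auxiliary claims --- the paper verifies the Markov property through the Chapman--Kolmogorov equation using the semiflow identity for $(\phi,\psi)$ and deduces stochastic continuity from continuity of $t\mapsto e^{\phi(t,u)+\psi(t,u)^\intercal x}$, whereas you invoke strong uniqueness of the SDE and the absence of fixed-time jumps, respectively --- and both of your alternatives are sound.
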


\begin{proof}

For $T>0$ and any purely imaginary vector $u\in i\R^d$, define $M(t)\coloneqq e^{\phi(T-t,u) + \psi(T-t,u)^\intercal X(t)}$, where $\phi:\R_+\times i\R^d\mapsto \C$ and $\psi(t,u):\R_+\times i\R^d\mapsto\C^d$ are functions that are differentiable with respect to $t$. Applying It\^o's formula,
\begin{align*}
& M(t) \\
 = &\; M(0) + \int_0^t M(s\text{-})\psi(T-s,u)^\intercal \sigma(X(s))\,\ud W(s) + \int_0^t M(s\text{-}) \int_{\calX} \left(e^{\psi(T-s,u)^\intercal z}-1\right) N(\ud s, \ud z)\\
 + & \int_0^t M(s\text{-})\left[-\partial_t\phi(T-s,u)-\partial_t\psi(T-s,u)^\intercal X(s) + \psi(T-s,u)^\intercal \mu(X(s\text{-}))\right]\,\ud s \\
+ &\; \frac{1}{2}\int_0^t M(s\text{-})\psi(T-s,u)^\intercal \sigma(X(s\text{-}))\sigma(X(s\text{-}))^\intercal \psi(T-s,u)\,\ud s \\
=&\; M(0) + \int_0^t M(s\text{-})\psi(T-s,u)^\intercal \sigma(X(s))\,\ud W(s) + \int_0^t M(s\text{-}) \int_{\calX} \left(e^{\psi(T-s,u)^\intercal z}-1\right) \tilde N(\ud s, \ud z)\\ 
 +& \int_0^t M(s\text{-}) \left[-\partial_t\phi(T-s,u) + \psi(T-s,u)^\intercal b +\frac{1}{2} \psi(T-s,u)^\intercal a \psi(T-s,u) \right]\ud s \\
 + &\int_0^t M(s\text{-}) \bigg[-\partial_t\psi(T-s,u)^\intercal X(s\text{-}) + \psi(T-s,u)^\intercal \beta X(s\text{-}) + \frac{1}{2} \sum_{i=1}^d \psi(s,u)^\intercal\alpha_i \psi(s,u) X_i(s\text{-})\bigg]   \ud s  \\
 + &\int_0^t M(s\text{-}) (\lambda + \kappa^\intercal X(s\text{-})) \int_{\calX} \left(e^{\psi(T-s,u)^\intercal z}-1\right)\nu(\ud z) \ud s.
\end{align*}
Hence, if $\phi$ and $\psi$ satisfy the following generalized Riccati equations
\begin{align*}
\partial_t\phi(t,u) & = \psi(t,u)^\intercal b +\frac{1}{2} \psi(t,u)^\intercal a \psi(t,u) + \lambda \int_{\calX} \left(e^{\psi(t,u)^\intercal z}-1\right)\nu(\ud z),\\
\partial_t\psi_i(t,u) & = \psi(t,u)^\intercal \beta_i + \frac{1}{2} \sum_{i=1}^d \psi(t,u)^\intercal\alpha_i\psi(t,u) + \kappa_i \int_{\calX} \left(e^{\psi(t,u)^\intercal z}-1\right)\nu(\ud z),\quad i=1,\ldots,d,
\end{align*}
with $\phi(0,u) = 0 $ and $\psi(0,u) =u$, where $\beta_i$ is the $i$-th column of $\beta$, then 
\begin{equation}\label{eq:exp_HG}
M(t) = M(0) + \int_0^t M(s\text{-})\psi(T-s,u)^\intercal \sigma(X(s))\,\ud W(s) + \int_0^t M(s\text{-}) \int_{\calX} \left(e^{\psi(T-s,u)^\intercal z}-1\right) \tilde N(\ud s, \ud z).
\end{equation}

It follows from Proposition 6.1 and Proposition 6.4 of \citet{DuffieFilipovicSchachermayer2003} that under Assumption \ref{asp:admissible}, the preceding generalized Riccati equations have a unique solution $(\phi(\cdot,u),\psi(\cdot,u)): \R_+ \mapsto \C_- \times \C_-^m \times i\R^{d-m} $ for all $u\in \C_-^m \times i\R^{d-m}$, where $\C_-^m = \{z\in\C^m| \mathrm{Re}(z)\in \R_-^m\}$. Hence, 
\begin{equation}\label{eq:nonpositive_real_part}
\phi(t,u) + \psi(t,u)^\intercal x \in \C_-,\quad x\in\calX,
\end{equation}
under Assumption \ref{asp:state_space}. Further, Proposition 7.4 of \citet{DuffieFilipovicSchachermayer2003} asserts that 
\begin{equation}\label{eq:semiflow}
\begin{array}{l}
\phi(t+s, u)  = \phi(t,u) + \phi(s, \psi(t, u))\\
\psi(t+s, u)  = \psi(t, \psi(s, u))
\end{array}
\end{equation}
for all $t,s\in \R_+$ and $u\in \C_-^m \times i\R^{d-m}$.

In light of (\ref{eq:exp_HG}) and (\ref{eq:nonpositive_real_part}), $(M(t):0\leq t \leq T)$ is a local martingale with $\verti{M(t)} \leq 1$ for all $t$, thereby a martingale. So 
\begin{equation}\label{eq:expon_affine_ch}
\E_x[e^{u^\intercal X(T)}] = \E_x[M(T)] = \E_x[M(0)] = e^{\phi(T,u) + \psi(T,u)^\intercal x}, 
\end{equation}
namely the characteristic function $\E_x[e^{u^\intercal X(t)}]$ is exponential-affine in $x$. In addition, it is easy to verify via  (\ref{eq:semiflow}) and (\ref{eq:expon_affine_ch}) the Chapman–Kolmogorov equation 
\[\pr_x(X(t+s)\in \cdot) = \int_{\calX} \pr_x(X(t)\in\ud y)\pr_y(X(s)\in \cdot), \]
implying that $X$ is a time-homogeneous Markov process, thereby an affine process by (\ref{eq:expon_affine_ch}). 

At last, $\E_x[e^{u^\intercal X(t)}]$ is clearly continuous in $t$ by (\ref{eq:expon_affine_ch}), indicating that $X$ is stochastically continuous.
\end{proof} 

\begin{proof}[Proof of Theorem \ref{theo:state_ind_jump}\ref{item:ergo_ind}.]

We first show that $X$ is Harris recurrent. Theorem 1.1 of \citet{MeynTweedie93} asserts that $X$ is Harris recurrent if (i) $X$ is a \emph{Borel right process} \citep[p.55]{Getoor75}, and (ii) there exists a \emph{petite} set $K$ for $X$, such that $\pr_x(\tau_K<\infty)=1$ for all $x\in\calX$, where  $\tau_K=\inf\{t\geq 0:X(t)\in K\}$. 

For condition (i), we note that $X$ is a Feller process by Theorem 5.1 of \citet{Keller-ResselSchachermayerTeichmann11}, Proposition 8.2 of \citet{DuffieFilipovicSchachermayer2003}, and Proposition \ref{prop:affine}. The Feller property of $X$ trivially implies that $X$ is a Borel right process. 

For condition (ii), fix an arbitrary $\Delta>0$ and note that $X^\Delta$ is a Feller chain since $X$ is a Feller process. By Theorem 3.4 of \citet{MeynTweedie92_part1}, the Feller property of $X^\Delta$ and Proposition \ref{prop:irreducible} immediately imply that all compact sets are petite for $X^\Delta$, thereby petite for $X$. In the sequel, we will show that there exists a compact set $K$ such that $\pr_x(\tau_K<\infty)=1$ for all $x\in\calX$. To that send, we first establish the following Lyapunov inequality
\begin{equation}\label{eq:Lyapunov_inequality_ergodic}
\mathscr Ag(x)\leq -c_1 + c_2\ind_K(x),\quad x\in \calX,
\end{equation}
for some compact set $K$ and some positive finite constants $c_1$ and $c_2$, where $g(x)= \log(1+\vertii{x}_H^2)$ for some $d\times d$ matrix $ H\succ 0$. 

Since $\beta$ is a stable matrix, there exists a $d\times d$ matrix $ H\succ 0$ for which $-(H\beta+\beta^\intercal H)\succ 0$; see \citet[Theorem 2.3(G), p.134]{BermanPlemmons94}. It is straightforward to calculate the gradient and Hessian of $g(x)$ as follows 
\begin{align*}
\nabla g(x) = \frac{2Hx}{1+\vertii{x}_H^2}\quad\mbox{and}\quad 
\nabla^2 g(x) = \frac{2(1+\vertii{x}_H^2)H - 4Hx x^\intercal H}{(1+\vertii{x}_H^2)^2}.
\end{align*}
Hence,
\begin{equation}
\label{eq:GV}
\mathscr Gg(x)= \frac{2}{1+\vertii{x}_H^2}\left[x^\intercal H(b+\beta x) + \frac{1}{2}\sum_{i,j=1}^d \left(a_{ij} + \sum_{k=1}^d\alpha_{k,ij} x_k\right)\left(H-\frac{2Hxx^\intercal H}{1+\vertii{x}_H^2}\right)_{ij} \right].
\end{equation}
We note that for any $i,j=1,\ldots,d$, 
\[\verti{(Hxx^\intercal H)_{ij}} = \verti{(Hx)_i (Hx)_j} \leq \vertii{Hx}^2 \leq \vertiii{H}^2\vertii{x}^2\leq \ubar{\delta}^{-1}\vertiii{H}^2  \vertii{x}_H^2,\]
where the last inequality follows from (\ref{eq:norm_equivalence}). Hence, 
\begin{equation}
\label{eq:estimate_4}
\frac{|(Hxx^\intercal H)_{ij}|}{1+\vertii{x}_H^2} = O(1),
\end{equation}
as $\vertii{x}_H\to\infty$. Therefore, we can rewrite (\ref{eq:GV}) as
\[\mathscr Gg(x)= \frac{2x^\intercal H\beta x+O(\vertii{x}_H)}{1+\vertii{x}_H^2} 
=\frac{2x^\intercal H\beta x}{\vertii{x}_H(1+\vertii{x}_H)}+o(1),\]
as $\vertii{x}_H\to\infty$. Moreover, by virtue of (\ref{eq:norm_equivalence}) and the fact that $-(H\beta + \beta^\intercal H)\succ 0$, 
\[-2x^\intercal H\beta x = -x^\intercal (H\beta + \beta^\intercal H) x\geq \ubar{\gamma}\vertii{x}^2 \geq \ubar{\gamma}\bar{\delta}^{-1}\vertii{x}^2_H,\]
where $\ubar{\gamma}>0$ is the smallest eigenvalue of $-(H\beta + \beta^\intercal H)$. Therefore, 
\begin{equation}\label{eq:limsup_1}
\limsup_{\vertii{x}_H\to\infty} \mathscr Gg(x) = \limsup_{\vertii{x}_H\to\infty} \frac{2x^\intercal H\beta x}{\vertii{x}_H(1+\vertii{x}_H)} \leq -\ubar{\gamma}\bar{\delta}^{-1}.
\end{equation}

On the other hand, it is easy to see that 
$1 + (\vertii{x}_H+\vertii{z}_H)^2 \leq 2(1+\vertii{x}_H^2)(1+\vertii{z}_H^2)$ for all $x,z\in\R^d$. Thus,
\begin{equation}\label{eq:reverse_fatou}
\log\left(\frac{1+\vertii{x+z}_H^2}{1+\vertii{x}_H^2}\right)  \leq 
 \log\left(\frac{1+(\vertii{x}_H+\vertii{z}_H)^2}{1+\vertii{x}_H^2}\right)  
\leq \log (2 (1+\vertii{z}_H^2)). 
\end{equation}
It is easy to see that $\log (2 (1+\vertii{z}_H^2))$ is integrable on $\calX$, since $\E\log(1+\vertii{Z}_H)<\infty$ if and only if $\E\log(1+\vertii{Z})<\infty$ in light of (\ref{eq:norm_equivalence}). Then, we move the left-hand-side of \eqref{eq:reverse_fatou} to the right-hand-side and apply Fatou's lemma to  obtain 
\begin{equation*}
    \limsup_{\vertii{x}_H\to\infty}\int_{\calX}\log\left(\frac{1+\vertii{x+z}_H^2}{1+\vertii{x}_H^2}\right)\nu(\ud z) 
    \leq \int_{\calX}\limsup_{\vertii{x}_H\to\infty}\log\left(\frac{1+\vertii{x+z}_H^2}{1+\vertii{x}_H^2}\right)\nu(\ud z) = 0.
\end{equation*}
With $\kappa=\mathbf 0$,
\begin{equation}\label{eq:limsup_2}
\limsup_{\vertii{x}_H\to\infty}\mathscr Lg(x) = \limsup_{\vertii{x}_H\to\infty} 
 \lambda \int_{\calX} \log\left(\frac{1+(\vertii{x}_H+\vertii{z}_H)^2}{1+\vertii{x}_H^2}\right)\nu(\ud z) 
 \leq  0.   
\end{equation}
We then conclude from (\ref{eq:limsup_1}) and (\ref{eq:limsup_2}) that there exists $k>0$ for which 
\[\mathscr Ag(x) = \mathscr Gg(x) + \mathscr Lg(x)\leq -\frac{1}{2} \ubar{\gamma}\bar{\delta}^{-1},\] 
for all $x\in \calX$ with $\vertii{x}_H>k$. Then, it is easy to check that the inequality  (\ref{eq:Lyapunov_inequality_ergodic}) holds by setting $K=\{x\in \calX:\vertii{x}_H\leq k\}$, $c_1=\ubar{\gamma}\bar{\delta}^{-1}/2$, and $c_2=\max\{1, \sup_{x\in K}(\mathscr Ag(x) + c_1)\}$.

We are now ready to show $\pr_x(\tau_K<\infty)=1$ for all $x\in\calX$.  Define $T_n=\inf\{t\geq 0: |X(t)|>n\}$. It follows from (\ref{eq:ito_2}) and  (\ref{eq:Lyapunov_inequality_ergodic}) that  
\begin{equation}\label{eq:Dynkin_martingale}
g(X(t\wedge T_n)) \leq g(X(0)) + \int_0^{t\wedge T_n}\left[-c_1 + c_2\ind_K(X(s\text{-}))\right]\ud s + S_1(t\wedge T_n) + S_2(t\wedge T_n), \quad n\geq 1.
\end{equation}
Noting that $|X(t\text{-})|\leq n$ is bounded for $t\in[0,T_n)$, $(S_i(t\wedge T_n):t\geq 0)$ is a martingale, $i=1,2$. Then by the optional sampling theorem (see, e.g., \citet[p.19]{KaratzasShreve91})
\[\E_x[g(X(t\wedge \tau_K\wedge T_n))] \leq g(x) - c_1\E_x(t\wedge\tau_K\wedge T_n ) , \quad x\in \calX\setminus K, \; n\geq 1. \]
Therefore,
\[c_1\E_x(t\wedge\tau_K\wedge T_n ) \leq g(x), \quad x\in \calX\setminus K, \;n\geq 1,  \]
since $g(x)\geq 0$ for all $x\in\calX$. Note that $X$ is non-explosive, so $T_n\to\infty$ as $n\to\infty$ $\pr_x$-a.s. for all $x\in\calX$. Therefore, by sending $n\to\infty$ and then sending $t\to\infty$, we conclude from the monotone convergence theorem that $c_1\E_x(\tau_K) \leq  g(x)$ for $x\in\calX\setminus K$. Hence, $\pr_x(\tau_K<\infty)=1$ for all $x\in\calX$. Consequently, $X$ is Harris recurrent by Theorem 1.1 of \citet{MeynTweedie93}.

Theorem 1.2 of \citet{MeynTweedie93} states that given the Harris recurrence, $X$ is positive Harris recurrent if  $\sup_{x\in K} \E_x(\tau_K(\Delta)) < \infty$. We now show this is indeed the case. For any $\Delta>0$, let $\tau_K(\Delta) \coloneqq \Delta + \Theta^{\Delta}\circ\tau_K$ be the first hitting time on $K$ after $\Delta$, where $\Theta^{\Delta}$ is the \emph{shift operator}; see \citet[p.8]{Sharpe88}. Then, 
\begin{equation}
\label{eq:hitting_time}
\E_x(\tau_K(\Delta)-\Delta) = \int_{\calX}\pr_x(X(\Delta)\in \ud y)\E_y(\tau_K) \leq \int_{\calX}c_1^{-1}g(y)\pr_x(X(\Delta)\in \ud y) = c_1^{-1}\E_xg(X(\Delta)),
\end{equation}
for all $x\in\calX$. In addition, it follows from (\ref{eq:Dynkin_martingale}) that
\[\E_xg(X(\Delta\wedge T_n)) \leq g(x) + (c_2- c_1)\E_x(\Delta \wedge T_n),\quad x\in\calX,\;n\geq 1.\]
Then, by Fatou's lemma and the monotone convergence theorem, 
\begin{equation}
\label{eq:hitting_time2}
\E_xg(X(\Delta)) \leq \liminf_{n\to\infty}\E_xg(X(\Delta\wedge T_n)) \leq  g(x) + (c_2- c_1)\Delta,\quad x\in\calX.
\end{equation}
Combining (\ref{eq:hitting_time}) and (\ref{eq:hitting_time2}) yields that 
\[\E_x(\tau_K(\Delta)) \leq c_1^{-1}(g(x)+d),\quad x\in\calX. \]
Hence, $\sup_{x\in K} \E_x(\tau_K(\Delta)) < \infty$, which implies that $X$ is positive Harris recurrent by Theorem 1.2 of \citet{MeynTweedie93}.

Finally, Theorem 6.1 of \citet{MeynTweedie93_part2} asserts that if $X^\Delta$ is $\varphi$-irreducible, which is true by Proposition \ref{prop:irreducible}, then  a positive Harris recurrent process is ergodic, i.e. (\ref{eq:ergodic_convergence}) holds. 
\end{proof}

\begin{proof}[Proof of Theorem \ref{theo:state_dep_jump}\ref{item:ergo_dep}.]
Following the proof Theorem \ref{theo:state_ind_jump}\ref{item:ergo_ind}, it suffices to show the Lyapunov inequality  (\ref{eq:Lyapunov_inequality_ergodic}) holds under the assumptions of Theorem \ref{theo:state_dep_jump}\ref{item:ergo_dep}. In fact, we prove the following stronger result 
\begin{equation}\label{eq:Lyapunov_inequality_exp_ergodic}
\mathscr Ag(x)\leq -c_1 g(x) + c_2\ind_K(x) ,\quad x\in \calX,
\end{equation}
for some compact set $K$ and some positive finite constants $c_1$ and $c_2$, where $g(x)=(1+\vertii{x}_H^2)^{p/2}$ for some $d\times d$ matrix $ H\succ 0$ and some constant $p\geq 1$.

Since $\E\vertii{Z}<\infty$, there exists $p\geq 1$ for which $\E\vertii{Z}^p<\infty$. Since $\beta + \E(Z)\kappa^\intercal$ is stable, there exists a matrix $H\succ 0$ such that 
\begin{equation}\label{eq:H}
-[H (\beta + \E(Z)\kappa^\intercal)+ (\beta + \E(Z)\kappa^\intercal)^\intercal H]\succ 0. 
\end{equation}

It is straightforward to calculate the gradient and Hessian of $g(\cdot)$ as follows 
\[
\nabla g(x) = \frac{pg(x)}{1+\vertii{x}_H^2} Hx\quad\mbox{and}\quad
\nabla^2 g(x) = \frac{pg(x)}{1+\vertii{x}_H^2}\left[H + \frac{(p-2)Hxx^\intercal H}{1+\vertii{x}_H^2}\right].
\]
It then follows from (\ref{eq:estimate_4}) that as $\vertii{x}_H\to\infty$,
\begin{align}
\mathscr G g(x) &  = \frac{pg(x)}{1+\vertii{x}_H^2} \left[x^\intercal H(b+\beta x)+ \frac{1}{2}\sum_{i,j=1}^d(a_{i,j}+\sum_{k=1}^d\alpha_{k,ij}x_k)\left(H + \frac{(p-2)Hxx^\intercal H}{1+\vertii{x}_H^2}\right)_{i,j}\right] \nonumber\\
& = pg(x)\left(\frac{x^\intercal H\beta x}{\vertii{x}_H^2}+o(1)\right).\label{eq:estimate_2}
\end{align}

To analyze the asymptotic behavior of $\mathscr Lg(x)$, we apply the mean value theorem, namely
\begin{equation*}
\label{eq:Taylor}
g(x+ z )-g(x) =   \nabla g(\xi) ^\intercal  z
=  p(1+\vertii{\xi}_H^2)^{p/2-1} \xi^\intercal H z,
\end{equation*}
where $\xi=x+u  z $ for some $u\in(0,1)$. Note that $\vertii{\xi}_H$ lies between $\vertii{x}_H$ and $\vertii{x+ z}_H$ and  $\xi^\intercal H z \kappa^\intercal x$ lies between $x^\intercal H  z \kappa^\intercal x$ and
$(x+ z )^\intercal H z  \kappa^\intercal x$. It then follows that
\begin{equation}\label{eq:ratio2}
\frac{\kappa^\intercal x(g(x+ z )-g(x))}{g(x)} = p\cdot\frac{(1+\vertii{\xi}_H^2)^{p/2-1}}{(1+\vertii{x}_H^2)^{p/2}}\cdot \xi^\intercal H z  \kappa^\intercal x \sim p\cdot \frac{x^\intercal H  z \kappa^\intercal x}{\vertii{x}_H^2}
\end{equation}
as $\vertii{x}_H\to\infty$ for all $z\in\R^d$.\footnote{Here, we use the notation that $f(x)\sim g(x)$ if $\lim_{\vertii{x}_H\to\infty} \frac{f(x)}{g(x)}=1.$} Moreover, 
\begin{align}
\verti{g(x+ z )-g(x)} &=  
p(1+\vertii{\xi}_H^2)^{p/2-1} \verti{z^\intercal H \xi } \nonumber\\[0.5ex]
& \leq p(1+\vertii{\xi}_H^2)^{p/2-1} \vertii{z} \vertii{H\xi}\nonumber\\[0.5ex]
&  \leq p \ubar{\delta}^{-1}(1+\vertii{\xi}_H^2)^{p/2-1} \vertii{z}_H \vertiii{H}_H\vertii{\xi}_H \nonumber\\[0.5ex]
& \leq   p \ubar{\delta}^{-1}(1+\vertii{\xi}_H^2)^{p/2-1/2} \vertii{z}_H \vertiii{H}_H,\label{eq:ratio0}
\end{align}
where the second inequality follows from (\ref{eq:norm_equivalence}). So 
\begin{align*}
\verti{\frac{\kappa^\intercal x(g(x+ z )-g(x))}{g(x)}}&  \leq \frac{\verti{\kappa}\verti{x} p \ubar{\delta}^{-1}(1+\vertii{\xi}_H^2)^{p/2-1/2} \vertii{z}_H \vertiii{H}_H}{(1+\vertii{x}_H^2)^{p/2}}\nonumber\\
&\leq  p \ubar{\delta}^{-2}\vertiii{H}_H \vertii{\kappa}_H \vertii{z}_H\frac{(1+\vertii{\xi}_H^2)^{p/2-1/2}}{(1+\vertii{x}_H^2)^{p/2-1/2}},
\end{align*}
where the second inequality follows from (\ref{eq:norm_equivalence}). Note that
\begin{equation*}\label{eq:norm_bound}
1+\vertii{\xi}_H^2  = 1+\vertii{x+uz}_H^2 \leq 2(1+\vertii{x}_H^2)(1+\vertii{uz}_H^2) \leq 2(1+\vertii{x}_H^2)(1+\vertii{z}_H^2),
\end{equation*}
so 
\begin{equation}\label{eq:integrable}
\int_{\calX}\left|\frac{\kappa^\intercal x(g(x+ z )-g(x))}{g(x)} \right| \nu(\ud z) \leq 2^{p/2-1/2}p\ubar{\delta}^{-2}\vertiii{H}_H\vertii{\kappa}_H \int_{\calX}(1+\vertii{z}_H^2)^{p/2}\,\nu(\ud z) < \infty.
\end{equation}
By (\ref{eq:ratio2}) and (\ref{eq:integrable}), the dominated convergence theorem dictates that 
\[
\kappa^\intercal x\int_{\calX} (g(x+ z )-g(x))\,\nu(\ud z)\sim
pg(x)\cdot \int \frac{x^\intercal H  z \kappa^\intercal
  x}{\vertii{x}_H^2} \,\nu(\ud z)
= pg(x)\cdot\frac{x^\intercal H \E (Z)\kappa^\intercal x}{\vertii{x}_H^2},
\]
and thus
\begin{equation}\label{eq:estiamte_new}
\mathscr Lg(x) =(\lambda+\kappa^\intercal x) \int_{\calX} (g(x+ z )-g(x))\,\nu(\ud z)\sim pg(x)\frac{x^\intercal H \E (Z)\kappa^\intercal x}{\vertii{x}_H^2},
\end{equation}
as $\vertii{x}_H\to\infty$. Combining (\ref{eq:estimate_2}) and (\ref{eq:estiamte_new}), 
\[\mathscr Ag(x) = \mathscr Gg(x)+\mathscr Lg(x)=pg(x)\left(\frac{x^\intercal H(\beta + \E(Z)\kappa^\intercal) x}{\vertii{x}_H^2}+o(1)\right)\]
as $\vertii{x}_H\to\infty$. By  (\ref{eq:H}), the definition of the matrix $H$,
\begin{align*}
- x^\intercal H(\beta + \E(Z)\kappa^\intercal) x = & -\frac{1}{2} x^\intercal \left[H(\beta + \E(Z)\kappa^\intercal) + (\beta + \E(Z)\kappa^\intercal)^\intercal H\right]x \geq  \frac{1}{2}
\ubar{\gamma} \vertii{x}^2 \geq \frac{1}{2} \ubar{\gamma}\bar{\delta}^{-1}\vertii{x}^2_H,   
\end{align*}
where $\ubar{\gamma}>0$ is the smallest eigenvalue of $-\left[H(\beta + \E(Z)\kappa^\intercal) + (\beta + \E(Z)\kappa^\intercal)^\intercal H\right]$. Hence, there exists $k > 0$ for which
$\mathscr Ag(x)\leq -\frac{1}{4}p \ubar{\gamma}\bar{\delta}^{-1} g(x)$
for all $x\in\calX$ with $\vertii{x}_H>k$. Therefore, (\ref{eq:Lyapunov_inequality_exp_ergodic}) holds by setting $K=\{x\in \calX:\vertii{x}_H\leq k\}$, $c_1=p\ubar{\gamma}\bar{\delta}^{-1}/4$, and $c_2=\max\{1,\sup_{x\in K}(\mathscr Ag(x) + c_1g(x))\}$. 
\end{proof}

\subsection{Exponential Ergodicity}\label{sec:exp_ergodic}

\begin{proof}[Proof of Theorem \ref{theo:state_ind_jump}\ref{item:exp_ergo_ind}.] 
Note that if $\E\vertii{Z}^p<\infty$ for some $p>0$, then $\E\vertii{Z}^q<\infty$ for all $q\in(0,p]$. We assume that $p\in(0,1)$, because the case that $p\geq 1$ is covered by Theorem \ref{theo:state_dep_jump}\ref{item:exp_ergo_dep}, which will be proved later.

Since $\beta$ is stable, there exists a matrix $H\succ 0$ such that $-(H \beta + \beta^\intercal H)\succ 0$. We show that $g_q(x)=(1+\vertii{x}_H^2)^{q/2}$ satisfies the inequality (\ref{eq:Lyapunov_inequality_exp_ergodic}) for some compact set $K$ and some positive finite constants $c_1,c_2$. Note that 
\[
g_q(x+z)-g_q(x)\leq (1+\vertii{x}_H^2+\vertii{z}_H^2)^{q/2} - (1+\vertii{x}_H^2)^{q/2} 
=\frac{q}{2}\xi^{q/2-1}\vertii{z}_H^q,
\]
where the equality follows from the mean value theorem and $\xi\in(1+\vertii{x}_H^2, 1+\vertii{x}_H^2+\vertii{z}_H^2)$.  Since $\xi>1$ and $p\in(0,1)$, we have $g_q(x+z)-g_q(x) \leq \frac{q}{2}\vertii{z}_H^q$. Likewise, it can be shown that $g_q(x) - g_q(x+z)\leq \frac{q}{2}\vertii{z}_H^q$. Hence, $|g_q(x+z)-g_q(x)| \leq \frac{q}{2}\vertii{z}_H^q$ and 
\[\verti{\int_{\calX} g_q(x+ z )-g_q(x)\,\nu(\ud z)}  \leq  \int_{\calX} \verti{g_q(x+ z )-g_q(x)}\,\nu(\ud z)
\leq \frac{q}{2}\E\vertii{Z}_H^q <\infty,
\]
It follows that with $\kappa=\mathbf 0$,
\[
\mathscr Lg_q(x) = \lambda \int_{\calX} (g_q(x+ z )-g_q(x))\,\nu(\ud z) = O(1),
\]
as $\vertii{x}_H\to\infty$. Moreover, applying (\ref{eq:estimate_2}) to $g_q(x)$, 
\begin{equation*}\label{eq:estiamte_new_3}
\mathscr Ag_q(x) = \mathscr Gg_q(x)+\mathscr Lg_q(x)=qg_q(x)\left(\frac{x^\intercal H\beta x}{\vertii{x}_H^2}+o(1)\right),
\end{equation*}
as $\vertii{x}_H\to\infty$. By the definition of the matrix $H$,
\begin{align*}
- x^\intercal H\beta x = & -\frac{1}{2} x^\intercal  (H\beta + \beta^\intercal H)x \geq  \frac{1}{2}
\ubar{\gamma} \vertii{x}^2 \geq \frac{1}{2} \ubar{\gamma}\bar{\delta}^{-1}\vertii{x}^2_H,   
\end{align*}
where $\ubar{\gamma}>0$ is the smallest eigenvalue of $-(H\beta + \beta^\intercal H)$. Hence, there exists $k > 0$ such that $\mathscr Ag(x)\leq -\frac{1}{4}p \ubar{\gamma}\bar{\delta}^{-1} g(x)$ 
for all $x\in\calX$ with $\vertii{x}_H>k$. Therefore, 
\begin{equation}\label{eq:Lyapunov_inequality_exp_ergodic_new}
    \mathscr Ag_q(x)\leq -c_1 g_q(x) + c_2\ind_K(x) ,\quad x\in \calX,
\end{equation}
where   $K=\{x\in \calX:\vertii{x}_H\leq k\}$, $c_1=p\ubar{\gamma}\bar{\delta}^{-1}/4$, and $c_2=\max\{1,\sup_{x\in K}(\mathscr Ag(x) + c_1g(x))\}$.

We apply It\^o's formula to $e^{c_1t}g_q(X(t))$. In particular, by (\ref{eq:ito_2}),
\begin{align*}
e^{c_1t}g_q(X(t)) =& \,g_q(X(0)) +\int_0^t e^{c_1s}[c_1g_q(X(s\text{-})) + \mathscr Ag_q(X(s\text{-}))]\ud s  \\ 
&+ \int_0^t e^{c_1s} \nabla g_q(X(s\text{-}))^\intercal\sigma(X(s))\,\ud W(s) \\ 
& 
+ \int_0^t e^{c_1s}\int_{\calX} (g_q(X(s\text{-})+z) - g_q(X(s\text{-})))\tilde N(\ud s, \ud z). 
\end{align*}
Clearly, the two stochastic integrals above are both martingales up to time $T_n$, where $T_n=\{t\geq 0: |X(t)|>n\}$. It follows from (\ref{eq:Lyapunov_inequality_exp_ergodic_new}) and the optional sampling theorem that
\[e^{c_1t}\E_x g_q(X(t\wedge T_n)) \leq g_q(x) + \E_x\int_0^{t\wedge T_n} e^{c_1s}\cdot c_2\ind_K(X(s))\,\ud s \leq g_q(x) + c_2c_1^{-1} \E_x e^{t\wedge T_n}.\]
We now apply Fatou's lemma and the monotone convergence theorem to conclude that 
\begin{equation}\label{eq:exp_ergo_ineq}
e^{c_1t}\E_xg_q(X(t)) \leq g_q(x) + c_2c_1^{-1}\cdot \liminf_{n\to\infty} \E_x e^{t\wedge T_n} =g_q(x) + c_2c_1^{-1} e^{c_1t}.
\end{equation}
Then we can adopt the argument used in the proof of Theorem 6.1 of \citet{MeynTweedie93_part3} to conclude that because of  (\ref{eq:exp_ergo_ineq}),  there exist positive finite constants $d_q$ and $\rho_q$ such that
\[\vertii{\pr_x(X(t)\in\cdot) - \pi(\cdot)}_{g_q+1}\leq d_q(g_q(x)+1)e^{-\rho_qt},\quad t\geq 0,\;x\in\calX.\]
By (\ref{eq:norm_equivalence}), there exist positive constants $d_1$ and $d_2$ such that 
$d_1\leq \verti{\frac{f_q(x)}{g_q(x)+1}} \leq d_2$ for all $x\in\calX$.
Hence, 
\[\vertii{\pr_x(X(t)\in\cdot) - \pi(\cdot)}_{f_q}\leq c_q f_q(x)e^{-\rho_qt},\quad t\geq 0,\;x\in\calX,\]
where $c_q=d_qd_2/d_1$. 
\end{proof}

\begin{proof}[Proof of Theorem \ref{theo:state_dep_jump}\ref{item:exp_ergo_dep}.]
Following the proof of Theorem \ref{theo:state_ind_jump}\ref{item:exp_ergo_ind}, it suffices to show that (\ref{eq:Lyapunov_inequality_exp_ergodic_new}) holds under the present assumptions. Note that $\E\vertii{Z}^q<\infty$ for all $q\in[1, p]$ since $\E\vertii{Z}^p<\infty$. Hence, we can apply the Lyapunov inequality (\ref{eq:Lyapunov_inequality_exp_ergodic}) to $g_q(x)$, which results in  (\ref{eq:Lyapunov_inequality_exp_ergodic_new}).  
\end{proof}

\subsection{Remarks on the Strong Mean-Reversion Condition}\label{sec:remarks}

The key condition that we impose to establish positive Harris recurrence of $X$ is the strong mean-reversion condition, i.e, $\beta+\E(Z)\kappa^\intercal$ is a stable matrix. Indeed, this condition cannot be relaxed in general as illustrated by the following example.

\begin{proposition}\label{prop:transient}
Suppose that $d=1$, $m=1$, and Assumptions \ref{asp:state_space}--\ref{asp:irreducible} hold. If $\E\verti{Z}<\infty$ and $\beta +\E(Z)\kappa>0$, then $X$ is transient .
\end{proposition}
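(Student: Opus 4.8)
**

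The plan is to establish transience by constructing a Lyapunov function $V$ and a corresponding \emph{drift inequality in the reverse direction}, mirroring the recurrence arguments of the preceding proofs but now exploiting the \emph{instability} of $\beta+\E(Z)\kappa$. Since $d=m=1$, the process lives on $\calX=\R_+$, and the scalar condition $\beta+\E(Z)\kappa>0$ means the effective drift pushes the process toward $+\infty$. The standard tool is Theorem 3.2 (or the analogous transience criterion) of \citet{MeynTweedie93_part3}: to prove $X$ is transient it suffices to exhibit a bounded, non-constant function $V:\calX\mapsto[0,\infty)$ and a set $C$ such that $\mathscr A V(x)\geq 0$ for $x$ outside $C$, together with $\sup_{x}V(x)>\sup_{x\in C}V(x)$, which forces escape to infinity with positive probability and hence uniform transience of sublevel sets.

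First I would choose a test function of the form $V(x)=1-e^{-\theta x}$ (or $V(x)=1-(1+x)^{-\theta}$) for a small parameter $\theta>0$ to be fixed. This $V$ is bounded in $[0,1)$, increasing, and tends to $1$ as $x\to\infty$, so it is a natural candidate to certify drift toward the boundary at infinity. I would compute $\mathscr A V(x)=\mathscr G V(x)+\mathscr L V(x)$ using the generator (\ref{eq:operator}). The diffusion part $\mathscr G V$ involves the affine drift $b+\beta x$ and the affine diffusion coefficient, and the jump part $\mathscr L V(x)=(\lambda+\kappa x)\int_\calX(V(x+z)-V(x))\nu(\ud z)$ will, after expanding $V(x+z)-V(x)$, reveal a leading term proportional to $(\beta+\E(Z)\kappa)\,x$ times a positive factor. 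The key is to verify that for $\theta$ small enough the sign of the dominant term is governed by $\beta+\E(Z)\kappa>0$, so that $\mathscr A V(x)\geq 0$ for all sufficiently large $x$, say $x>k$, with $C=[0,k]$.

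The main obstacle will be controlling the jump term $\mathscr L V$ uniformly and extracting the correct leading-order behavior, since $V(x+z)-V(x)=e^{-\theta x}(1-e^{-\theta z})$ does not factor as cleanly as the mean-value-theorem expansions used in the recurrence proofs, and the intensity $(\lambda+\kappa x)$ grows linearly. I would handle this by writing $1-e^{-\theta z}\leq \theta z$ and, for the lower bound, using a second-order estimate $1-e^{-\theta z}\geq \theta z-\tfrac{1}{2}\theta^2 z^2$ valid for the relevant range, then invoking $\E|Z|<\infty$ to integrate against $\nu$; the dominated-convergence control is legitimate exactly because $\nu$ is a probability measure with a finite first moment. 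After dividing through by $\theta e^{-\theta x}$, the inequality $\mathscr A V(x)\geq 0$ reduces, in the limit $\theta\downarrow 0$ and $x\to\infty$, to the requirement $\beta+\E(Z)\kappa>0$ plus lower-order corrections that vanish, so a suitable fixed small $\theta$ and large $k$ suffice.

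Finally, with the inequality $\mathscr A V(x)\geq 0$ for $x>k$ established, I would appeal to the transience criterion: the process started from a large $x_0>k$ is a bounded submartingale (via It\^o's formula (\ref{eq:ito_2}) and optional stopping, using that the stochastic integrals are local martingales and $V$ is bounded) up to the hitting time of $C$, which yields $\pr_{x_0}(\tau_C=\infty)>0$; combined with the $\varphi$-irreducibility of $X^\Delta$ from Proposition \ref{prop:irreducible}, this gives that $C$ and hence each compact set is uniformly transient, so $\calX=\R_+$ admits a countable cover by uniformly transient sets and $X$ is transient by definition. I expect the drift computation for $\mathscr L V$ to be the delicate step; everything downstream is a routine application of the Meyn--Tweedie transience theorem.
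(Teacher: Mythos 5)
Your proposal is correct and follows essentially the same route as the paper: the same Lyapunov function $g(x)=1-e^{-\epsilon x}$, the same computation of $\mathscr A g$ showing the coefficient of $x$ is $\epsilon\beta-\tfrac{\epsilon^2}{2}\alpha+\kappa(1-\E e^{-\epsilon Z})$, which is positive for small $\epsilon$ precisely because its derivative at $\epsilon=0$ is $\beta+\kappa\E(Z)>0$, and the same appeal to a Meyn--Tweedie-type transience criterion (the paper cites Theorem 3.3 of Stramer and Tweedie). The only simplification you missed is that $V(x+z)-V(x)=e^{-\theta x}(1-e^{-\theta z})$ factors exactly, so the jump term needs no delicate estimation.
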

\begin{proof}
The proof also relies on Lyapunov inequalities; see Theorem 3.3 of \citet{StramerTweedie94}. Specifically, transience follows if there exists a  bounded function $g$ and a closed set $K$ such that
\begin{equation}
\label{eq:Lyapunov_inequality_transient}
\mathscr Ag(x)\geq 0,\quad x\in \calX \setminus K, 
\end{equation}
and
\begin{equation}
\label{eq:Lyapunov_inequality_transient_2}
\sup_{x\in K}g(x) < g(x_0),\quad x_0\in \calX \setminus K.
\end{equation}

Let $g(x)=1-e^{-\epsilon x}$ for some $\epsilon>0$. Obviously, $g$ is bounded for $x\in \calX=\R_+$.  Then, 
\begin{align*}
\mathscr A g(x) & = (b+\beta x)g'(x) + \frac{1}{2}(a+\alpha x) g''(x) + (\lambda + \kappa x) \int_{\R_+}(g(x+z)-g(x))\,\nu(\ud z) \\
 & =  e^{-\epsilon x}\left[\epsilon(b+\beta x) - \frac{\epsilon^2}{2}(a+\alpha x)+(\lambda+\kappa x) \int_{\R_+}(1-e^{-\epsilon z})\,\nu(\ud z)\right] \\
 & = e^{-\epsilon x}\left[\left(\epsilon \beta - \frac{\epsilon^2}{2}\alpha + \kappa(1-\E e^{-\epsilon Z}) \right)x +\epsilon b - \frac{\epsilon^2}{2}a +\lambda (1-\E e^{-\epsilon Z})\right].
\end{align*}
Let $h(\epsilon)$ be the coefficient of $x$ in the brackets above, i.e.,
$h(\epsilon) \coloneqq \epsilon \beta - \frac{\epsilon^2}{2}\alpha + \kappa(1-\E e^{-\epsilon Z})$. 
Clearly, $h(0)=0$ and $h'(0) = \beta + \kappa \E(Z) >0$, yielding that $h(\epsilon)>0$ for some $\epsilon>0$. Fixing this $\epsilon$, we see that 
$\mathscr Ag(x) \sim e^{-\epsilon x} h(\epsilon)x$ as $x\to\infty$. Hence, there exists $k>0$ such that $\mathscr Ag(x) >0$ for $x \in \calX\setminus K$, where $K\coloneqq [0, k]$, proving (\ref{eq:Lyapunov_inequality_transient}). Moreover, (\ref{eq:Lyapunov_inequality_transient_2}) is true since  $g(x)$ is increasing in $x$. 
\end{proof}
 
The ``boundary'' case, i.e. $\beta+\E(Z)\kappa^\intercal=0$, is more complicated as the behavior of the process may depend on other parameters. We leave its analysis for future research.

\section{Limit Theorems}\label{sec:limits}

In this section, we prove SLLNs and FCLTs for additive functionals of $X$ of the form $\int_0^t h(X(s))\,\ud s$ or $\sum_{i=1}^nh(X(i\Delta))$ for some function $h$. Limit theorems for both discrete-time and continuous-time Markov processes have been extensively studied in the past; see, e.g., \citet{GlynnMeyn1996}, \citet{KontoyiannisMeyn03}, \citet[chap.17]{MCSS09}, and references therein. In particular, positive Harris recurrence is ``almost'' sufficient for a LLN to hold. Conditions for FCLTs, on the other hand, often include exponential ergodicity, or Lyapunov inequalities of the form similar to (\ref{eq:Lyapunov_inequality_exp_ergodic}). 

Nevertheless, existing FCLTs for discrete-time Markov processes are not applicable to the skeleton chain $X^\Delta$ because they typically require one to establish a ``discrete-time'' version of the Lyapunov inequality of the form $\E_x[g(X(\Delta))]\leq cg(x)$ for some constant $c<1$, some function $g\geq 1$, and all $x$ off a compact set. This is awkward mathematically given the fact that the transition measure $\pr_x(X(\Delta)\in\cdot)$ is not known explicitly. Our approach to establish (\ref{eq:FCLT_disc_func}) is to first consider the scenario in which $X(0)$ follows the stationary distribution. We then apply an FCLT for stationary sequences, i.e., Theorem 3.1 of \citet[p.351]{EthierKurtz86}, whose conditions can be verified as a consequence of exponential ergodicity (\ref{eq:exp_ergodic_convergence}). To generalize the FCLT to an arbitrary initial state we follow an argument similar to one used in \cite{GlynnMeyn1996}.

The asymptotic variances, $\sigma_h^2$ in (\ref{eq:FCLT_cont_func}) and $\gamma_h^2$ in (\ref{eq:FCLT_disc_func}), can be expressed in terms of the solution to a \emph{Poisson equation}; see, e.g., \citet{GlynnMeyn1996}. But it typically has no closed form in terms of the parameters $(a,\alpha_1\ldots,\alpha_d,b,\beta,\lambda,\kappa,\nu)$ of the SDE \eqref{eq:SDE}. However, when $h$ is the (vector-valued) identity function, we are indeed able to analytically derive both the asymptotic mean and asymptotic covariance matrix that appear in the corresponding FCLT (see Corollary \ref{cor:limits}), thanks to the tractable affine structure. 

\subsection{Strong Law of Large Numbers}\label{sec:SLLN}

\begin{proof}[Proof of Theorem \ref{theo:state_ind_jump}\ref{item:SLLN_ind} and Theorem \ref{theo:state_dep_jump}\ref{item:SLLN_dep}.]

We have established positive Harris recurrence and ergodicity of $X$ in Section \ref{sec:ergodicity} under the assumptions of Theorem \ref{theo:state_ind_jump}\ref{item:SLLN_ind} or Theorem \ref{theo:state_dep_jump}\ref{item:SLLN_dep}. So $\pi(|h|)<\infty$ for any measurable function $h:\calX\mapsto\R$ with $\vertii{h}_{f_p}<\infty$. The SLLN (\ref{eq:LLN_cont_func}) then follows  from Theorem 2 of \citet{Sigman90}. 

For the skeleton chain $X^\Delta$, note that the stationary distribution $\pi$ of $X$ is necessarily invariant for $X^\Delta$. In addition, $X^\Delta$ is $\varphi$-irreducible by Proposition \ref{prop:irreducible}, so $X^\Delta$ is positive Harris recurrent. Hence, the SLLN (\ref{eq:LLN_disc_func}) follows from Theorem 17.1.7 of \citet[p.427]{MCSS09}. 
\end{proof}

\subsection{Functional Central Limit Theorem}\label{sec:CLT} 

\begin{proof}[Proof of Theorem \ref{theo:state_ind_jump}\ref{item:FCLT_ind} and Theorem \ref{theo:state_dep_jump}\ref{item:FCLT_dep}.] 

Fix $q>2$ and an arbitrary measurable function $h:\calX\mapsto\R$ with $\vertii{h^q}_{f_p}<\infty$. 

We have shown in Section \ref{sec:exp_ergodic} that there exists a matrix $H\succ 0$, a compact set $K$ and positive finite constants $c_1,c_2$ such that
$\mathscr Ag(x)\leq -c_1g(x) + c_2\ind_K(x)$ for all $x\in\calX$, 
where  $g(x) = (1+\vertii{x}_H^2)^{p/2}$. Thanks to (\ref{eq:norm_equivalence}), $\vertii{h}_{f_p}<\infty$ if and only if $\vertii{h}_g<\infty$. Moreover, we have shown in Section \ref{sec:ergodicity} that $K$ is a petite set for $X$. It then follows immediately from Theorem 4.4 of \citet{GlynnMeyn1996} that  (\ref{eq:FCLT_cont_func}) holds as $n\to\infty$ $\pr_x$-weakly in $\mathcal D[0,1]$ for all $x\in\calX$. 

We now show that (\ref{eq:FCLT_disc_func}) holds $\pr_\pi$-weakly in $\mathcal D[0,1]$, where $\pi$ is the stationary distribution of $X$. This can be done by applying an FCLT for stationary sequences to $\{\bar{h}(X(n\Delta):n=0,1,\ldots\}$, which is a mean-zero stationary sequence if $X(0)\sim\pi$, where $\bar{h}(x)\coloneqq h(x)-\pi(h)$.  

Specifically, let $\mathscr F_k$ and $\mathscr F^k$ denote the $\sigma$-algebras generated by $(X(n\Delta):n\leq k)$ and  $(X(n\Delta):n\geq k)$, respectively. Let $\varphi_1(l) \coloneqq \sup_{\Gamma\in\mathscr F^{k+l}}\E_\pi \verti{\pr(\Gamma|\mathscr F_k) - \pr(\Gamma)}$ denote the \emph{measure of mixing} \citep[p.346]{EthierKurtz86} of $\mathscr F_k$ and $\mathscr F^{k+l}$ associated with the $L^1$-norm. Then, by Theorem 3.1 and Remark 3.2(b) of \citet[p.351]{EthierKurtz86}, it suffices to verify that for some $\epsilon>0$,
\begin{equation}\label{eq:FCLT_conditions}
\E_\pi\left[\verti{\bar{h}(X(n\Delta))}^{2+\epsilon}\right]<\infty\quad\mbox{and}\quad \sum_{l=0}^\infty[\varphi_1(l)]^{\epsilon/(2+\epsilon)}<\infty.
\end{equation}

Let $\epsilon=q-2>0$. Then, 
\[\E_\pi\left[\verti{\bar{h}(X(n\Delta))}^{2+\epsilon}\right] =\pi(\bar{h}^q) \leq \vertii{\bar{h}^q}_{f_p} \pi(f_p) <\infty,\] 
verifying the first condition in \eqref{eq:FCLT_conditions}. To verify the second, note that by the Markov property, for any $\Gamma\in\mathscr F^{k+l}$ there exists a function $w_\Gamma$ with $\verti{w_\Gamma(\cdot)}\leq 1$ such that $\pr[\Gamma|\mathscr F_{k+l}]=w_\Gamma(X((k+l)\Delta))$.  If $X(0)\sim\pi$, then for any $\Gamma\in\mathscr F^{k+l}$,
\begin{align*}
\verti{\pr(\Gamma|\mathscr F_k) - \pr(\Gamma)} 
& = \verti{\E[w_\Gamma(X((k+l)\Delta))|\mathscr F_k]-\E[w_\Gamma(X((k+l)\Delta))]} \nonumber\\[0.5ex]
& = \verti{\int_{\calX} w_\Gamma(y)\pr_{X(k\Delta)}(X(l\Delta)\in\ud y) - \int_{\calX}\int_{\calX}w_\Gamma(y)\pr_x(X((k+l)\Delta)\in\ud y)\pi(\ud x) } \nonumber\\[0.5ex]
& \leq \vertii{\pr_{X(k\Delta)}(X(l\Delta)\in\cdot) -\pi(\cdot)}, \label{eq:mixing}
\end{align*}
where $\vertii{\cdot}$ is the total variation norm, where the inequality follows from Definition \ref{def:norm} and the fact that $\verti{w_\Gamma(\cdot)}\leq 1$. It follows that 
\[
\varphi_1(l)
 \leq  \E_\pi \vertii{\pr_{X(k\Delta)}(X(l\Delta)\in\cdot) -\pi(\cdot)}
\leq  c_p e^{-\rho_p l\Delta }\E_\pi[f(X(k\Delta))]  =  c_p \pi(f_p) e^{-\rho_p l\Delta},
\]
where the second inequality holds because of Theorem \ref{theo:state_ind_jump}\ref{item:exp_ergo_ind} and Theorem \ref{theo:state_dep_jump}\ref{item:exp_ergo_dep}. This immediately implies $\sum_{l=0}^\infty[\varphi_1(l)]^{\epsilon/(2+\epsilon)}<\infty$. Therefore, we conclude that (\ref{eq:FCLT_disc_func}) holds $\pr_\pi$-weakly in $\mathcal D[0,1]$. 

We now prove that (\ref{eq:FCLT_disc_func}) indeed holds as $n\to\infty$ $\pr_x$-weakly in $\mathcal D[0,1]$ for all $x\in\calX$. To that end, we first show that
\begin{equation}\label{eq:uniform_convergence}
\pr_x\left(\lim_{n\to\infty}\sup_{0\leq t\leq 1} \verti{Y_n(t)-Y_{n,l}(t)}=0\right)=1,\quad x\in\calX,
\end{equation}
for any positive integer $l$, where $Y_n(t) \coloneqq  n^{-1/2}\sum_{i=1}^{\lfloor nt \rfloor} \bar{h}(X(i\Delta))$ and $Y_{n,l}(t) \coloneqq n^{-1/2}\sum_{i=l+1}^{\lfloor nt \rfloor+l} \bar{h}(X(i\Delta))$.
Note that for all sufficiently large $n$, 
\begin{align*}
\sup_{0\leq t\leq 1} \verti{Y_n(t)-Y_{n,l}(t)} & = \frac{1}{n}\verti{\sum_{i=1}^n \bar{h}(X(i\Delta))-\sum_{i=l+1}^{n+l} \bar{h}(X(i\Delta))}^2\\ 
&= \frac{1}{n}\verti{\sum_{i=1}^l \bar{h}(X(i\Delta))-\sum_{i=n+1}^{n+l} \bar{h}(X(i\Delta))}^2\\[0.5ex] 
 & \leq  \frac{1}{n}\sum_{i=1}^l \bar{h}^2(X(i\Delta)) + \frac{1}{n}\sum_{i=n+1}^{n+l} \bar{h}^2(X(i\Delta)) \to 0,\quad \pr_x-\mathrm{a.s.},
\end{align*}
as $n\to\infty$ for all $x\in\calX$, because 
$\frac{1}{n}\sum_{i=1}^n \bar{h}^2(X(i\Delta)) \to \pi(\bar{h}^2)<\infty$, $\pr_x-\mathrm{a.s.}$, 
as $n\to\infty$ for all $x\in\calX$, thanks to Theorem \ref{theo:state_ind_jump}\ref{item:SLLN_ind} and Theorem \ref{theo:state_dep_jump}\ref{item:SLLN_dep}. This completes the proof of (\ref{eq:uniform_convergence}). 

Let $\phi$ be a bounded continuous functional $\phi$ on $\mathcal D[0,1]$. Then (\ref{eq:uniform_convergence}) implies that for any positive integer $l$, $\verti{\E_x[\phi(Y_n)]-\E_x[\phi(Y_{n,l})]}\to 0$ as $n\to\infty$ for all $x\in\calX$. This limit can be rewritten as 
\begin{equation}\label{eq:norm_bound_1}
\lim_{n\to\infty}\verti{\E_x[\phi(Y_n)]- \int_{\calX}\pr_x(X(l\Delta)\in \ud y)\E_y[\phi(Y_{n,l})]} = 0,\quad x\in\calX.
\end{equation}
On the other hand, note that
\[
\verti{\int_{\calX}\pr_x(X(l\Delta)\in \ud y)\E_y[\phi(Y_{n,l})] - \E_\pi[\phi(Y_n)]} \leq  \vertii{\pr_x(X(l\Delta)\in \cdot)-\pi(\cdot)}  \cdot \sup_{g\in\mathcal D[0,1]} \verti{\phi(g)}.
\]
Since $\vertii{\pr_x(X(l\Delta)\in \cdot)-\pi(\cdot)} \to 0$ as $l\to\infty$ by Theorem \ref{theo:state_ind_jump}\ref{item:ergo_ind} and Theorem \ref{theo:state_dep_jump}\ref{item:ergo_dep}, for any $\delta>0$ we can choose $l$ so large that 
\begin{equation}\label{eq:norm_bound_2}
\verti{\int_{\calX}\pr_x(X(l\Delta)\in \ud y)\E_y[\phi(Y_{n,l})] - \E_\pi[\phi(Y_n)]} \leq \delta.
\end{equation}
It then follows from (\ref{eq:norm_bound_1}) and (\ref{eq:norm_bound_2}) that 
$\limsup_{n\to\infty}\verti{\E_x[\phi(Y_n)]-\E_\pi[\phi(Y_n)]}\leq \delta$.
Since (\ref{eq:FCLT_disc_func}) holds $\pr_\pi$-weakly in $\mathcal D[0,1]$, we must have 
$\lim_{n\to\infty}\verti{\E_\pi[\phi(Y_n)]-\E_\pi[\phi(W)]}=0$, 
and thus 
\[\limsup_{n\to\infty}\verti{\E_x[\phi(Y_n)]-\E_\pi[\phi(W)]}\leq \delta.\]
Sending $\delta\to 0$ yields that (\ref{eq:FCLT_disc_func}) holds $\pr_x$-weakly in $\mathcal D[0,1]$ for all $x\in\calX$.
\end{proof}

\subsection{A Special Case} Thanks to the affine structure, the asymptotic mean and the asymptotic variance can be derived analytically when $h$ is the identity function, i.e. $h(x)=x$. Note that with $h$ being $\R^d$-valued, the corresponding SLLN and FCLT are multivariate. The calculation follows closely the approach used in \cite{ZhangBlanchetGieseckeGlynn15} so we omit the details. 

\begin{corollary}\label{cor:limits}
If Assumptions \ref{asp:state_space}--\ref{asp:irreducible} hold and $\E\vertii{Z}<\infty$, then 
\[
\pr_x\left(\lim_{t\to\infty}\frac{1}{t}\int_0^t h(X(s))\,\ud s = v \right)=1, \quad x\in\calX,
\]
where $v=-(\beta + \E(Z)\kappa^\intercal )^{-1} (b+\lambda \E(Z))$. Furthermore, if $\E\vertii{Z}^{2+\epsilon}<\infty$ for some $\epsilon>0$, then 
\[n^{1/2}\left(\frac{1}{n}\int_0^{n\cdot} X(s)\,\ud s - v\right)\Rightarrow \Sigma^{1/2} W(\cdot),
\]
as $n\to\infty$ $\pr_x$-weakly in $\mathcal D_{\R^d}[0,1]$ for all $x\in\calX$, where 
\[\Sigma=A(a+\lambda\E (Z Z^\intercal))A^\intercal  +\sum_{i=1}^m  v_i A(\alpha_i+\kappa_i\E (Z Z^\intercal) )A^\intercal.
\]
\end{corollary}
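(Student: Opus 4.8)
The plan is to obtain $v$ as the stationary mean $\pi(h)$ and then to identify $\Sigma$ as the ergodic average of the predictable quadratic variation of the martingale produced by an affine solution of the Poisson equation.

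For the SLLN, since $\E\vertii{Z}<\infty$ we may take $p=1$ in Theorem~\ref{theo:state_dep_jump}, so $X$ is $f_1$-exponentially ergodic and $\pi(\vertii{\cdot})<\infty$; as $\vertii{h_i}_{f_1}\le 1$ for each coordinate $h_i(x)=x_i$, Theorem~\ref{theo:state_dep_jump}\ref{item:SLLN_dep} gives the first display with $v=\pi(h)=\E_\pi[X]$. To evaluate $v$ I would take the stationary expectation of the integrated form of the SDE~\eqref{eq:SDE}: compensating the jump integral (its rate is $(\lambda+\kappa^\intercal X(s))\E(Z)$) and using $\E_\pi[X(t)]\equiv v$ yields
\begin{equation*}
0 = \bigl(b+\beta v\bigr)+\bigl(\lambda+\kappa^\intercal v\bigr)\E(Z)=\bigl(b+\lambda\E(Z)\bigr)+\bigl(\beta+\E(Z)\kappa^\intercal\bigr)v,
\end{equation*}
which is exactly the relation $\pi(\mathscr A g)=0$ for the linear functions $g(x)=x_i$. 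Since $\beta+\E(Z)\kappa^\intercal$ is stable it is invertible, so $v=-(\beta+\E(Z)\kappa^\intercal)^{-1}(b+\lambda\E(Z))$. The only care needed is to justify interchanging $\E_\pi$ with the local-martingale terms, which follows by localizing at $T_n=\inf\{t:\vertii{X(t)}>n\}$ and using $\pi(\vertii{\cdot})<\infty$ and $\E\vertii{Z}<\infty$ to pass to the limit.

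For the FCLT, write $A\coloneqq-(\beta+\E(Z)\kappa^\intercal)^{-1}$ and seek an affine solution $U(x)=Ax$ of the vector Poisson equation $\mathscr A U(x)=v-x$. Using $\mathscr G(c^\intercal x)=c^\intercal(b+\beta x)$ and $\mathscr L(c^\intercal x)=(\lambda+\kappa^\intercal x)\,c^\intercal\E(Z)$, one checks row by row that this choice of $A$ makes the coefficient of $x$ equal to $-I$ (the $d\times d$ identity) and the constant term equal to $v$, so $\mathscr A U=v-x$ holds. Applying It\^o's formula \eqref{eq:ito_2} componentwise and rescaling gives the decomposition
\begin{equation*}
\frac{1}{\sqrt n}\int_0^{n\cdot}\bigl(X(s)-v\bigr)\,\ud s=\frac{1}{\sqrt n}\bigl(U(X(0))-U(X(n\cdot))\bigr)+\frac{1}{\sqrt n}M(n\cdot),
\end{equation*}
where $M=M_c+M_d$ is the vector martingale with $M_c(t)=\int_0^t A\sigma(X(s\text{-}))\,\ud W(s)$ and, since $U$ is affine so that $U(x+z)-U(x)=Az$, $M_d(t)=\int_0^t\int_{\calX}Az\,\tilde N(\ud s,\ud z)$.

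As $M_c$ is continuous and $M_d$ purely discontinuous they are orthogonal, and the predictable quadratic-covariation rate is the affine matrix $A\bigl(a+\sum_{i=1}^m X_i(t)\alpha_i\bigr)A^\intercal+(\lambda+\kappa^\intercal X(t))A\,\E(ZZ^\intercal)A^\intercal$. Under $\E\vertii{Z}^{2+\epsilon}<\infty$ we may take $p=2+\epsilon$, so $\pi(f_{2+\epsilon})<\infty$ and $\pi(\vertii{\cdot}^2)<\infty$; applying the SLLN \eqref{eq:LLN_cont_func} to each affine entry gives $\frac1n\langle M\rangle_{n\cdot}\to(\cdot)\,\Sigma$ with $\Sigma=A(a+\sum_{i=1}^m v_i\alpha_i)A^\intercal+(\lambda+\kappa^\intercal v)A\E(ZZ^\intercal)A^\intercal$. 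Invoking $\kappa_i=0$ for $i>m$ (Assumption~\ref{asp:admissible}) so that $\kappa^\intercal v=\sum_{i=1}^m\kappa_i v_i$, this rearranges into the stated expression for $\Sigma$; a standard martingale FCLT (see, e.g., \citet[chap.~7]{EthierKurtz86}) then yields $n^{-1/2}M(n\cdot)\Rightarrow\Sigma^{1/2}W(\cdot)$.

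The main obstacle is the last step: verifying the hypotheses of the martingale FCLT and discarding the boundary term. For $M_d$ one must check a Lindeberg-type condition, namely $\frac1n\E_x\int_0^{n\cdot}\int_{\calX}\vertii{Az}^2\ind(\vertii{Az}>\delta\sqrt n)\,\Lambda(X(s\text{-}))\,\nu(\ud z)\,\ud s\to0$, and this is precisely where the strengthened moment $\E\vertii{Z}^{2+\epsilon}<\infty$ (rather than merely $\E\vertii{Z}^2<\infty$) is consumed, together with $\pi(\vertii{\cdot})<\infty$. The boundary term is handled by showing $\sup_{0\le t\le1}n^{-1/2}\vertii{X(nt)}\to0$ in $\pr_x$-probability, which follows from $f_{2+\epsilon}$-exponential ergodicity (Theorem~\ref{theo:state_dep_jump}\ref{item:exp_ergo_dep}). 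Alternatively, one may sidestep both difficulties by applying the already-established scalar FCLT (Theorem~\ref{theo:state_dep_jump}\ref{item:FCLT_dep}) to $h_\theta(x)=\theta^\intercal x$ and invoking the Cram\'er--Wold device for the weak convergence, using the martingale computation above solely to identify the limiting covariance through $\theta^\intercal\Sigma\theta=\sigma_{h_\theta}^2$.
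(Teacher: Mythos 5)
Your proposal is correct, and it supplies in full the argument that the paper itself omits (the paper only remarks that ``the calculation follows closely the approach used in Zhang et al.\ (2015)''). The route you take --- identify $v$ from $\pi(\mathscr A g)=0$ for linear $g$, solve the vector Poisson equation $\mathscr A U = v - x$ with the affine candidate $U(x)=Ax$, $A=-(\beta+\E(Z)\kappa^\intercal)^{-1}$, and then apply a martingale FCLT to the resulting decomposition --- is exactly the intended one, and your computation of the predictable quadratic covariation correctly reproduces the stated $\Sigma$ (in particular you correctly identify the undefined symbol $A$ in the corollary, and you correctly use $\alpha_i=\mathbf 0$ and $\kappa_i=0$ for $i>m$ to truncate the sum at $m$; note also that the corollary's hypotheses tacitly require the strong mean-reversion condition, which you rightly invoke). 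The only step you gloss is the negligibility of the boundary term $n^{-1/2}\sup_{0\le t\le 1}\vertii{U(X(nt))}$: exponential ergodicity controls $\sup_t\E_x\vertii{X(t)}^{2+\epsilon}$ and hence the maximum over the skeleton, but bounding the continuous-time supremum additionally requires a maximal inequality for the SDE over each interval $[k\Delta,(k+1)\Delta]$. Your proposed alternative --- applying the already-proved scalar FCLT of Theorem \ref{theo:state_dep_jump}\ref{item:FCLT_dep} to $h_\theta(x)=\theta^\intercal x$ (admissible since $\vertii{h_\theta^{2+\epsilon}}_{f_{2+\epsilon}}<\infty$) together with the Cram\'er--Wold device, using the martingale calculation only to identify $\theta^\intercal\Sigma\theta=\sigma_{h_\theta}^2$ --- cleanly sidesteps both this issue and the Lindeberg verification, and is the tidiest way to close the argument within the paper's own machinery.
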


\section*{Acknowledgments}
The first author was partially supported by the Hong Kong Research Grants Council under General Research Fund (ECS 624112). The second author gratefully acknowledges the support and the intellectual environment of the Institute for Advanced Study at the City University of Hong Kong, where this work was completed.

\bibliographystyle{chicago}
\bibliography{AJD}

\end{document}